\documentclass[12pt]{article}
\usepackage[utf8]{inputenc}
\usepackage{algorithmicx}

\usepackage{amsmath}
\usepackage{amstext}
\usepackage{amssymb}
\usepackage{amsfonts}
\usepackage{amstext}
\usepackage{comment}
\usepackage{geometry}
\usepackage[pdftex]{graphicx}
\usepackage{subcaption}
\usepackage{indentfirst}
\usepackage{appendix}
\usepackage{float}
\usepackage{setspace}
\usepackage{tikz}
\usepackage{hyperref,cleveref}
\usepackage{aliascnt}
\usepackage{bm}
\usepackage{caption} 
\usepackage[round,sort,comma,sectionbib]{natbib}

\usepackage[moderate]{savetrees}
\usepackage{times}

\hypersetup{colorlinks=true,
            citecolor=blue,
            linkcolor=blue,
            urlcolor=blue,
			bookmarksopen=true,
			pdfstartview={XYZ null null 1.00},
			pdfpagelayout={SinglePage}
}
\usepackage[round,sort,comma,sectionbib]{natbib}

\usepackage{paralist}

\newtheorem{theorem}{Theorem}

\newtheorem{assumption}{Assumption}

\newtheorem{definition}{Definition}
\newtheorem{example}{Example}

\newtheorem{lemma}{Lemma}

\newtheorem{proposition}{Proposition}

\newtheorem{remark}{Remark}

\newenvironment{proof}{{\noindent\textbf{Proof}.\quad}}{\hfill $\blacksquare$\par}

\renewcommand{\eqref}[1]{Equation (\ref{#1})}

\begin{document}

\title{Generalized Multidimensional Contests with Asymmetric\\ Players: Equilibrium and Optimal Prize Design\thanks{We thank Qian Jiao and Zenan Wu for their helpful comments and suggestions. All remaining errors are our own.}}
\author{Siyuan Fan\thanks{Siyuan Fan: Department of Political Science, UC San Diego, 9500 Gilman Drive, La Jolla, CA, 92093.
\emph{E-mail:} sif004@ucsd.edu.} 
\and 
Zhonghong Kuang\thanks{Zhonghong Kuang: School of Economics, Renmin University of China, 59 Zhongguancun Street, Beijing, China, 100872. \emph{E-mail:} kuang@ruc.edu.cn.}
\and 
Jingfeng Lu\thanks{Jingfeng Lu, Department of Economics, National University of Singapore, 10 Kent Ridge Crescent, Singapore, 119260. \emph{E-mail:} ecsljf@nus.edu.sg.}
}

\date{March 2026}

\maketitle

\begin{abstract}\baselineskip=0.8\normalbaselineskip
\noindent We study $n$-dimensional contests between two players with heterogeneous effort costs, where each dimension (battle) is modeled as a Tullock contest. Prize-allocation rules are identity-independent, budget-balanced, and weakly increasing in the number of victories. Players' costs can be separable across battles or exhibit cross-battle externalities. We identify a tight sufficient condition under which a unique equilibrium exists and is in pure strategies, for all admissible prize-allocation rules and all degrees of player asymmetry. Under this condition, we characterize the effort-maximizing prize-allocation rule: the entire prize goes to the player who wins more battles than the opponent by at least a prespecified margin, and is split equally if neither player meets this threshold. In the symmetric-player case, the majority rule is optimal if $n$ is odd. Interestingly, cross-battle cost externalities do not change the optimal prize allocation rule in our setting.  


\medskip
\noindent \textbf{JEL Classification}: C72, D72.

\noindent \textbf{Keywords}: Multidimensional Contest, Simultaneous Contest, Nash Equilibrium, Majority Rule, Prize Design
\end{abstract}

\newpage

\section{Introduction}\label{Sec: Intro}

Multidimensional contests between players with different effort efficiency are pervasive across social and economic spheres. A typical instance is the US presidential election: Democratic and Republican candidates compete across 50 states, each yielding a binary local outcome. Those outcomes are released simultaneously, and the president is elected via the Electoral College. Such a race forms a multidimensional contest, with each state being a distinct dimension, and candidates are (potentially) asymmetric due to incumbency advantages. Similarly, in multidimensional R\&D competition among firms, each dimension corresponds to a distinct function or technology, and firms are asymmetric in research productivity.
Another example is litigation disputes between competing firms, such as the Apple–Samsung litigation in the US.\footnote{See Jessica E. Vascellaro, ``Apple Wins Big in Patent Case," \emph{Wall Street Journal}, August 25, 2012.} Their dispute consists of a series of patent infringement lawsuits about the design features of smartphones and tablet computers. US courts adjudicate multiple patents simultaneously, treating each patent as an independent battle. The prevailing party in each dispute depends on relative litigation effort, and aggregate success across battles determines the final outcome. Here, asymmetry may arise from jurisdictional or local protection effects.

Despite their practical relevance, multidimensional contests between asymmetric players remain understudied in the literature. In such contests, four key factors shape the equilibrium outcomes: (i) the number of dimensions, (ii) the discriminatory power of the winner-selection mechanism within each dimension, (iii) how prize allocation depends on aggregate performance across dimensions, and (iv) the degree of players' asymmetry. Given the complexity of the multidimensional contests illustrated above, fundamental questions about equilibrium characterization and contest design arise naturally. First and foremost, what is the structure of the equilibrium, and under what conditions does a pure‑strategy equilibrium exist? Subsequently, provided that a pure‑strategy equilibrium exists for all feasible prize allocation rules, what is the optimal prize allocation rule?

We address these questions utilizing a parsimonious contest model in which two heterogeneous contestants compete in $n$ disjoint, simultaneous battles. The two contestants have power-form convex (including linear) effort costs that may differ across players. In each battle, the two contestants choose their effort levels simultaneously. The winner of each battle is determined by the Tullock contest success function with a discriminatory power not exceeding one, which is constant across all battles. The contest organizer is endowed with a fixed budget that is fully divisible and must be fully exhausted. The prize allocation rule depends on the number of battles each player wins and must be identity‑independent, since fairness or neutrality is fundamental in political elections, litigation, and sports. For the prize design problem, the objective is to maximize the expected total effort of the two players by choosing a prize allocation rule.\footnote{Total effort maximization is a well-studied objective in contest design. See, e.g., \cite{Drugov2020Tournament}, \cite{Fu2020On}, \cite{Toomas2024Optimal}, \cite{Olszewski2020Performance}, and \cite{Zhang2024optimal}, among many others. Our analysis also applies when the contest organizer aims to maximize the winner's effort across all battles.} For example, in multidimensional R\&D competition among firms, the effort exerted by all participants is socially valuable to the industry: it can generate knowledge spillovers that foster future discoveries, irrespective of which firm wins or loses the current contest.

The first contribution of this paper is to provide a \emph{sufficient condition} to guarantee the existence and uniqueness of a pure-strategy equilibrium. The sufficient condition applies to all feasible prize allocation rules and arbitrary levels of player asymmetry, and, more importantly, it is tight. To our knowledge, this is the first analytical result on the equilibrium characterization of $n$-dimensional contests between asymmetric contestants.

We first construct a strategically equivalent \emph{linear-cost} game and introduce \emph{uniform strategies}, under which effort is equal across battles. We then consider a \emph{restricted game} in which both contestants are \emph{required} to adopt uniform strategies. \autoref{lem:Uniform Strategy} shows that, if an equilibrium exists in this restricted game, it is also an equilibrium in the linear-cost game; moreover, all equilibria in the linear-cost game involve both players using uniform strategies. It remains to show that, under the sufficient condition we introduce, a pure-strategy equilibrium exists uniquely in the restricted game.

Within the restricted game, the first-order conditions admit a unique solution (\autoref{claim:equilibrium}), but it might not be an equilibrium. Fixing the opponent’s strategy, verifying that one's strategy constitutes a global optimum, rather than a local extremum, is nontrivial due to the absence of global concavity.
To resolve this issue, we compute the first-order derivative of the payoff function and construct a \emph{sign-equivalent} auxiliary function, which is proven to be \emph{monotonically decreasing} under the sufficient condition. Hence, the payoff function is single-peaked, which ensures that the solution to the first-order conditions is a pure-strategy equilibrium.

For the transformed linear cost contest, the sufficient condition is that the cost-function-convexity adjusted discriminatory power is no greater than $2/(n+1)$ (\autoref{Theo:existence}), where $n$ is the number of battles.\footnote{Suppose the discriminatory power is $R$. A power-form cost with exponent $\gamma$ is strategically equivalent to linear costs under a reduced discriminatory power $R/\gamma$, our sufficient condition equivalently requires $R\leq2\gamma/(n+1)$: more convex costs relax the condition.} As the number of battles increases, the winner-selection mechanism must become noisier to admit a pure-strategy equilibrium. We further show that this threshold is tight (\autoref{prop:necessary}) in the following sense. For a larger discriminatory power, no pure-strategy equilibrium exists under the majority rule when contestants' asymmetry is sufficiently large. Finally, we can ensure the equilibrium uniqueness in the linear-cost game and recover the equilibrium in the original game (\autoref{pro:unique}). 

The second contribution of this paper is to characterize the optimal prize allocation rule that maximizes total effort, yielding a solution with intuitive interpretations: the majority rule with a tie margin (\autoref{thm:tiemargin}). Under this rule, the whole prize is awarded to the player who wins more battles than his rival by a predetermined margin; and if no player achieves this, the prize is split equally. This design intensifies competition by requiring the stronger player to surpass the specified margin to claim the full prize, while still granting weaker players a meaningful opportunity to secure half the prize. Such a prize allocation rule effectively incentivizes effort provision from both sides, fostering a more competitive and engaging contest overall. When $n$ is odd, and players are symmetric, the majority rule is optimal (\autoref{corollary}), which rationalizes the wide adoption of majority rule in reality. Moreover, when the two players become more asymmetric, or the contest becomes more discriminatory, \autoref{Lemma for p} shows that a higher tie margin should be set at the optimum.

Our main analysis assumes costs that are separable across battles. In \autoref{subsec:alternative}, we consider aggregate convex costs that depend on total effort, so that increasing effort in one battle raises the marginal cost on every other, creating cross-battle externalities.
We find that our equilibrium analysis and optimal design extend to this alternative scenario. Moreover, cross-battle cost externalities do not change the optimal prize allocation rule.

We also extend the analysis to identity-dependent rules in \autoref{subsec:ID}, where the two players may receive different prizes for the same number of victories. The sufficient condition tightens: the cutoff for the convexity-adjusted discriminatory power falls from $2/(n+1)$ to $1/n$.  Under this condition, the effort-maximizing rule is a majority rule with a headstart (\autoref{pro:headstart}): the weaker player receives an initial advantage in wins, and the entire prize goes to the player with more total victories.


\medskip
\noindent\textbf{Related Literature.} 
Our paper primarily belongs to the literature on simultaneous-move multidimensional contests.
\citet{szentes2003beyond,szentes2003three} and \cite{Ewerhart2024A} model each dimension as an all-pay auction and characterize the equilibrium, while \citet*{Klumpp2006Primaries} adopts Tullock contest success function. We adopt the Tullock framework but considerably extend \citet*{Klumpp2006Primaries}'s settings in two ways. First, while their analytical results are mainly obtained with linear-cost symmetric players, our model permits arbitrary asymmetric participants and power-form convex costs.\footnote{\cite{Klumpp2006Primaries} conduct numerical simulations to study the equilibrium for two asymmetric players.} Second, whereas they consider a majority rule under an odd number of battles, we incorporate a generalized prize allocation mechanism that allows for arbitrary numbers of battles. The latter generalization facilitates the study of the prize design problem that is beyond the scope of \citet*{Klumpp2006Primaries}. \citet*{Lu2024performance} focus on a two-dimensional contest between (a)symmetric players, where each dimension is formulated as a Tullock contest with a discriminatory power of one. They introduce and optimize a grand prize that is awarded to a player only if that player wins both dimensions.

Contests with multiple simultaneous or sequential battles between individual players are well studied in the literature. One strand of literature examines the Colonel Blotto games, where two players simultaneously distribute resources across multiple battlefields with resource constraints \citep{borel1921theorie,friedman1958game}. Follow-up research mainly focuses on equilibrium characterization in various settings \citep{roberson2006colonel,kvasov2007contests,roberson2012non,kovenock2021generalization}. In Colonel Blotto games, each battlefield is associated with an independent prize, yet battles are indirectly linked through the players’ budget constraints. By contrast, in the present paper, battles are interacting as their outcomes jointly determine the allocation of the prize. A more recent strand of literature studies multiple battles that occur in a network: players often face more than one conflict, and conflicting parties must engage multiple opponents across different conflicts simultaneously. For example, \citet*{xu2022equilibrium} allow players to participate in multiple heterogeneous conflicts at the same time, with no restriction on the number of participants in each conflict. 
\citet{fu2025decentralized} study decentralized contest design in a network, where the favoritism of each battle is determined by its independent organizer.

The literature on contests with multiple sequential battles between individual players is also extensive.  \citet*{Harris1987Racing} first identify the momentum effect in such contests. Within the Tullock contest framework, \citet*{Klumpp2006Primaries} show that under majority rule, sequential battles elicit lower total effort than simultaneous battles owing to this momentum effect. Following the Tullock framework, \citet{Feng2018Split} characterize optimal prize structures in dynamic three‑battle contests. Separately, \citet{konrad2009multi} characterizes the subgame perfect Nash equilibrium within an all‑pay auction framework. Building on this framework, \citet{gelder2014custer} examines the last‑stand behavior in multi‑battle contests. Differentiating from this literature, we focus on simultaneous battles in this study.

A related but distinct strand of multi-battle contest literature investigates team contests with pairwise battles. \citet*{Fu2015Team} characterize the equilibrium properties of such contests under a simple majority rule. They establish a history-independence result, implying that the outcome of each battle can be regarded as an independent lottery. This arises because, in their team contest setting, each participant competes in precisely one battle, incurs only their own effort cost in that battle, but benefits from their teammates’ successes in other battles. By contrast, the multidimensional contest we analyze features competition between just two individual contestants on multiple battlefields. \cite{hafner2017tug} further examines a tug-of-war team contest in which players compete via all-pay auctions. \cite{Barbieri2024winnereffort} explore the optimal temporal structure that maximizes aggregate effort by winning parties. As for the prize design problem, \citet*{Feng2024OptimalTeam} extend the history-independence result to arbitrary prize allocation rules and study the optimal prize design allowing biased prize allocation, while \cite*{kuang2026wp} focus on unbiased rules.

The rest of the paper proceeds as follows. \autoref{Sec: Settings} sets up the model. \autoref{Sec: Equilibrium Analysis} analyzes the pure-strategy equilibrium. \autoref{Sec: Prize Design} studies the effort-maximizing prize design. \autoref{sec:dicussion} considers an alternative cost structure, allows identity-dependent prize allocation rules, and compares our results with previous studies. \autoref{Sec: Conclusion} concludes. The technical proofs are relegated to the appendix. 

\section{Model}\label{Sec: Settings}


Two risk-neutral, heterogeneous contestants compete in a contest with $n$ battles. Let $i\in\{A,B\}$ denote the player and $j\in\mathcal{N}=\{1,2,\cdots,n\}$ denote the battle. All battles occur simultaneously. At each battle, two contestants simultaneously exert their effort $\hat{x}_{i(j)}\ge 0$. Suppose player $i$'s cost of exerting effort $\hat{x}_{i(j)}$ is $c_i\hat{x}_{i(j)}^\gamma$ with $\gamma\geq1$.\footnote{Our analysis extends to concave cost functions with $\gamma<1$ to the extent that the condition in \autoref{pro:unique} holds.} Without loss of generality, we assume that player $A$ is (weakly) stronger than player $B$, namely $c_A\leq c_B$.

Given an effort profile $(\hat{x}_{A(j)},\hat{x}_{B(j)})$ in battle $j$, the winner is determined by a generalized Tullock contest success function, and player $i$'s winning probability is 
\begin{equation*}
    p_{i(j)}(\hat{x}_{A(j)},\hat{x}_{B(j)})= \left\{
\begin{array}{lr}
    \frac{\hat{x}_{i(j)}^R}{\hat{x}_{A(j)}^R+\hat{x}_{B(j)}^R}, & {\rm if }\quad \hat{x}_{A(j)}^R+\hat{x}_{B(j)}^R>0, \\
    \frac{1}{2}, & {\rm if }\quad \hat{x}_{A(j)}^R+\hat{x}_{B(j)}^R=0. \\
\end{array}\right.
\end{equation*}
The parameter $R$, known as the \emph{discriminatory power}, measures the precision of the winner-selection mechanism, and the term $\hat{x}^R$ represents the ``impact" of effort $\hat{x}$. 

\paragraph{Prize Allocation Rules.} The contest organizer has a fixed budget, which is fully divisible and normalized as 1. The prize allocation rule is contingent on the number of battles each player wins and must be \textbf{identity-independent}. We use $v$ to represent the prize allocation rule, where $v(k)$ represents a player's prize when he wins $k$ battles. Throughout the paper, we restrict our attention to the prize allocations that satisfy nonnegativity, monotonicity, and budget balance conditions, which is standard in the literature (see e.g., \cite*{Feng2018Split,Feng2024OptimalTeam}).

\begin{assumption}\label{ass:3}
    (i) Non-negativity. $v(k)\geq 0,\forall k=0,1,\cdots,n$.
    
    (ii) Monotonicity. $v(k+1)\geq v(k),\forall k=0,1,\cdots,n-1$.
    
    (iii) Budget Balance. $v(k)+v(n-k)=1,\forall k=0,1,\cdots,n$.
\end{assumption}

A prize allocation rule is \textbf{feasible} if it satisfies \autoref{ass:3}. In a feasible rule, the additional victory is never detrimental, and the prize budget is always exhausted. Our model can be interpreted in an alternative way: The prize budget is indivisible, and the prize to player $i$ is player $i$'s winning chance of the whole prize.\footnote{\autoref{ass:3}(i) holds as winning probabilities cannot be negative. \autoref{ass:3}(ii) means that winning an additional battle does not decrease the player’s winning chance. \autoref{ass:3}(iii) means that there must be a winner.} Without loss of generality, we normalize $v(0)=0$ and $v(n)=1$. For example, when $n$ is odd, the conventional majority rule, denoted by $v_\text{MR}$, is defined as $v_\text{MR}(k)\triangleq\mathbf{1}(k>\frac{n}{2})$. 

\paragraph{Payoffs.} A pure strategy for player $i$ is a point $\hat{\bm{x}}_i=(\hat{x}_{i(1)},\hat{x}_{i(2)},\cdots,\hat{x}_{i(n)})\in\mathbb{R}_+^n$, while a mixed strategy can be chosen from the set of all probability distributions over $\mathbb{R}_+^n$. For players' pure strategies $\hat{\bm{x}}_A$ and $\hat{\bm{x}}_B$, $\mathbb{P}_i(k|n,\hat{\bm{x}}_A,\hat{\bm{x}}_B)$ denotes the probability that player $i$ wins exactly $k$ battles out of $n$ battles. Given the prize allocation rule $v$ and strategy profile $(\hat{\bm{x}}_A,\hat{\bm{x}}_B)$, player $i$'s payoff function is
\begin{equation*}
\pi_i(\hat{\bm{x}}_A,\hat{\bm{x}}_B)=\sum_{k=0}^n\mathbb{P}_i(k|n,\hat{\bm{x}}_A,\hat{\bm{x}}_B)v(k)-c_i\sum_{j=1}^n \hat{x}_{i(j)}^{\gamma}.
\end{equation*}

The contest organizer aims to maximize the expected total effort of two players, i.e., 
\begin{equation*}
\mathbf{TE}=\mathbb{E}\left[\sum_{j=1}^n\left(\hat{x}_{A(j)}+\hat{x}_{B(j)}\right)\right],
\end{equation*}
by choosing and fully committing to a prize allocation rule $v$. 

\section{Pure-Strategy Equilibrium}\label{Sec: Equilibrium Analysis}

In this section, we characterize the pure-strategy equilibrium. In \autoref{subsec:preliminary}, we conduct some preliminary analysis to simplify the problem. Assuming the existence of pure-strategy equilibria, we characterize one equilibrium in \autoref{subsec:pure-strategy-characterization}. In \autoref{subsec:sufficient}, we provide a \emph{sufficient} condition that ensures the existence of the aforementioned pure-strategy equilibrium under all feasible rules and arbitrary levels of contestants' asymmetry. \autoref{subsec:necessary} show that this condition is almost \emph{necessary}. Finally, in \autoref{subsec:unique}, we establish that under the sufficient condition, the equilibrium is unique.

\subsection{Preliminaries}\label{subsec:preliminary}

In the subsection, we reformulate and simplify the problem. 

\paragraph{Problem Reformulation} By redefining $x_{i(j)}=\hat{x}_{i(j)}^\gamma$ as the effective effort, the winning probability becomes $x_{i(j)}^{R/\gamma}/(x_{A(j)}^{R/\gamma}+x_{B(j)}^{R/\gamma})$ and the cost reduces to $c_ix_{i(j)}$. This reformulation is equivalent to a setting with linear costs (i.e., $\gamma=1$) and \emph{effective} discriminatory power $r=R/\gamma$. In the remainder of this section, we focus on the transformed \emph{linear-cost} game with a discriminatory power of $r\in[0,1]$ to characterize the equilibrium.

\paragraph{Uniform Strategy} Depending on $n$, the set of mixed strategies is high-dimensional. A particularly simple form of strategy is described in the following definition, in which the effort level is always equally distributed across battles.

\begin{definition}
Player $i\in\{A,B\}$ chooses a \textbf{uniform strategy} if he chooses $x_i\in \mathbb{R}_+$ according to some cumulative distribution $F_i$, and then sets $x_{i(j)}=x_i$ for all $j\in \mathcal{N}$; a uniform strategy can be pure (if $F_j$ is degenerate) or mixed.
An equilibrium in which both players choose uniform strategies is called a \textbf{uniform equilibrium}.
\end{definition}

\paragraph{Uniform Equilibrium} The \emph{linear-cost} game between two players is denoted by $\mathcal{G}$. Define $\Tilde{\mathcal{G}}$ as the restricted game of $\mathcal{G}$ where players can only choose uniform strategies. In the following proposition, we will show the relationship between the equilibrium of $\Tilde{\mathcal{G}}$ and the equilibrium of $\mathcal{G}$.

\begin{proposition}[Uniform Equilibrium]\label{lem:Uniform Strategy}
With $r\in(0,1]$, (i) the equilibrium of $\tilde{\mathcal{G}}$ is also the equilibrium of $\mathcal{G}$; (ii) if equilibrium exists in $\tilde{\mathcal{G}}$, then every equilibrium in $\mathcal{G}$ is uniform equilibrium.
\end{proposition}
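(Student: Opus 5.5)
The plan is to reduce both parts to a best-response property against a uniform opponent, and then argue that every equilibrium of $\mathcal{G}$ must have that form.

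\emph{Setup for a fixed uniform opponent.} Fix player $B$'s uniform strategy $F_B$; the roles are symmetric. For a pure vector $\bm{x}_A$ and a realization $x_B$, battle outcomes are independent with probabilities $p_j=p(x_{A(j)},x_B)=x_{A(j)}^r/(x_{A(j)}^r+x_B^r)$. Player $A$'s expected prize is then $g(p_1,\dots,p_n)=\sum_k\mathbb{P}(k)v(k)$, a symmetric function that is affine in each $p_j$. A first observation is that for $r\in(0,1]$, $p(\cdot,x_B)$ is increasing and concave in own effort.

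\emph{Key step: averaging over pairs of battles.} Suppose battles $1$ and $2$ receive $x_{A(1)}\neq x_{A(2)}$, and hold the other coordinates fixed. Multilinearity gives
\begin{equation*}
g=\alpha\,p_1p_2+\beta\,(p_1+p_2)+\delta,
\end{equation*}
where $\alpha=\mathbb{E}[v(K'+2)-2v(K'+1)+v(K')]$, $\beta=\mathbb{E}[v(K'+1)-v(K')]\ge 0$, and $K'$ counts wins in the remaining $n-2$ battles. I would then replace both efforts by their average $\bar x=(x_{A(1)}+x_{A(2)})/2$, which keeps the cost unchanged. By concavity, $p_1+p_2$ weakly rises. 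If $\alpha\ge 0$, the product $p_1p_2$ also rises, since it is maximized at equal probabilities for a fixed sum, and both enter with a nonnegative sign. Iterating this averaging (or using a symmetrization/majorization argument over all permutations) would show that against a uniform opponent some uniform vector is a best response. Integrating over $x_B\sim F_B$ preserves the argument, since $\bar x$ does not depend on $x_B$. This gives part (i): a profile that is an equilibrium of $\tilde{\mathcal{G}}$ admits no profitable deviation in $\mathcal{G}$.

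\emph{Expected main obstacle: $\alpha<0$.} The curvature term $\alpha$ can be negative, for instance when a majority-type rule leaves $K'$ on the convex-to-concave side. In that case spreading probabilities apart raises $p_1p_2$'s contribution, and pure averaging may fail. My first attempt would be to combine the two effects rather than separate them. Write the deviation gain as a function of the spread $t$, with $x_{A(1)}=\bar x+t$ and $x_{A(2)}=\bar x-t$. Then show its derivative in $t$ is nonpositive by exploiting the special Tullock form when $r\le1$: the identity $\partial p/\partial x = r\,p(1-p)/x$ lets one compare $\alpha$ with $\beta$ through the bound $|\alpha|\le\beta$-type inequalities available from monotonicity and budget balance. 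Failing that, I would use the first-order conditions: at any interior best response, $\Delta_j\,\partial p_j/\partial x_{A(j)}=c_A$ for all $j$. Here $\Delta_1-\Delta_2=\alpha(p_2-p_1)$. I would show that two distinct solutions $x_{A(1)}\neq x_{A(2)}$ of this system are incompatible with $r\le 1$ and the second-order conditions, which forces equal efforts at any optimum.

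\emph{Part (ii).} Take any equilibrium $(\sigma_A,\sigma_B)$ of $\mathcal{G}$, possibly mixed and non-uniform. For each player, symmetrize the strategy by a uniformly random permutation of battles; the resulting profile is again an equilibrium, since the game is invariant under permuting battles. I would then apply the strict version of the averaging step. Against a permutation-symmetric opponent, I would show that any non-uniform effort vector in the support is strictly improvable, with strictness coming from strict concavity of $p$ when $x_B>0$. This forces supports onto the diagonal, so every equilibrium is uniform. The existence hypothesis on $\tilde{\mathcal{G}}$ enters here to guarantee that the uniform best-response value is attained, ruling out degenerate zero-effort situations.
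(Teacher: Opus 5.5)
\textbf{Part (i): the case $\alpha<0$ is left open.} Your decomposition $g=\alpha\,p_1p_2+\beta\,(p_1+p_2)+\delta$ is correct. However, the case $\alpha<0$, which you flag as the main obstacle, is never resolved. The ``$|\alpha|\le\beta$-type'' comparison is not carried out. The first-order-condition fallback would at best classify interior critical points; it does not give the global comparison that part (i) needs.

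The missing idea is to regroup so that both coefficients have a known sign. Conditional on $K'=t$, the expected prize is $v(t+1)-\Delta v(t)\,Q(0)+\Delta v(t+1)\,Q(2)$, with $Q(0)=(1-p_1)(1-p_2)$ and $Q(2)=p_1p_2$. By monotonicity it then suffices that averaging lowers $Q(0)$ and raises $Q(2)$.
\begin{itemize}
\item You already have the second.
\item The first is $(x_{A(1)}^r+x_B^r)(x_{A(2)}^r+x_B^r)<(\bar x^r+x_B^r)^2$. This follows from $x_{A(1)}^r+x_{A(2)}^r\le 2\bar x^r$ (concavity of $t\mapsto t^r$) and $x_{A(1)}^r x_{A(2)}^r<\bar x^{2r}$ (AM--GM).
\end{itemize}
In your notation, with $s=p_1+p_2$ and $q=p_1p_2$, this says $\Delta s\ge\Delta q$. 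The gain is then $\mathbb{E}[\Delta v(K')]\,(\Delta s-\Delta q)+\mathbb{E}[\Delta v(K'+1)]\,\Delta q$, which is nonnegative whatever the sign of $\alpha$. Your instinct that $-\alpha\le\beta$ is the relevant bound is right, but it closes the argument only together with $\Delta s\ge\Delta q$. This is exactly the paper's Step 1.

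\textbf{Part (ii): a more serious gap.} Symmetrizing by a random permutation makes the opponent's strategy exchangeable, not uniform. In a given realization, the two battles you average face different opponent efforts $a\neq b$, so the averaging step no longer applies. After conditioning on the swap you would need, for instance, $p(\bar x,a)\,p(\bar x,b)\ge\frac12\big[p(x_1,a)\,p(x_2,b)+p(x_1,b)\,p(x_2,a)\big]$. Concavity or log-concavity of $p$ only bounds the left side below by the \emph{geometric} mean of the two products, so ``strict concavity of $p$'' does not close this.

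There is a second problem. Permutation invariance alone only shows that jointly permuted profiles $(\pi\sigma_A,\pi\sigma_B)$ are equilibria. Showing that the independently symmetrized profile is an equilibrium requires the interchangeability (convexity) of equilibria, which comes from $F_A+F_B=1$ together with separable costs.

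That structure is exactly what the paper uses, and it shows you have misidentified the role of the existence hypothesis. The paper's argument runs:
\begin{enumerate}
\item By part (i), there is a uniform equilibrium $(\mu_A,\mu_B)$ of $\mathcal{G}$.
\item By interchangeability, for any equilibrium $(\mu_A',\mu_B')$ the profile $(\mu_A,\mu_B')$ is also an equilibrium.
\item Hence $\mu_B'$ must be a best response to the \emph{uniform} $\mu_A$. There the strict averaging argument applies and rules out any non-uniform $\mu_B'$.
\end{enumerate}
So the hypothesis serves to supply a uniform opponent; it is not about whether a uniform best response is attained.
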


\autoref{lem:Uniform Strategy} implies that either (i) every equilibrium is a uniform equilibrium or (ii) no equilibrium is a uniform equilibrium. Which scenario it is depends on whether an equilibrium exists in the restricted game $\tilde{\mathcal{G}}$. Moreover, if a unique equilibrium exists in $\tilde{\mathcal{G}}$, it is also the unique equilibrium in $\mathcal{G}$.\footnote{When each dimension is modeled as an all-pay auction (i.e., $r=+\infty$), players do not employ uniform strategies in equilibrium \citep{szentes2003three,Ewerhart2024A}. If we constrain contestants to adopt uniform strategies, an equilibrium nonetheless exists where both contestants randomize over an interval---a structure analogous to that of a single-battle all-pay auction. As such, \autoref{lem:Uniform Strategy} no longer holds when $r=+\infty$.}

\medskip
\begin{proof}
(i) We demonstrate that, if the opponent adopts a uniform strategy, any strategy that ``spreads'' effort unevenly across battles is strictly dominated by smoothing effort across battles evenly. The intuition is: with concavity $r\in(0,1]$, equalizing effort across battles raises the sum of winning probabilities (across battles) and thus the distribution of battle outcomes, holding total effort fixed. Because the prize depends only on the total number of wins, such averaging would increase the expected prize under the same cost. 

(ii) We can show that equilibria in $\mathcal{G}$ are \emph{interchangeable}: if $(\mu_A,\mu_B)$ and $(\mu_A',\mu_B')$ are equilibria, then $(\mu_A',\mu_B)$ and $(\mu_A,\mu_B')$ are equilibria as well. Suppose $(\mu_A,\mu_B)$ is a uniform equilibrium and $(\mu_A',\mu_B')$ is not a uniform equilibrium with $\mu_B'$ not being a uniform strategy. Then, $(\mu_A,\mu_B')$ should be an equilibrium by interchangeability. However, $\mu_B'$ cannot be a best response to a uniform strategy $\mu_A$, implying that $(\mu_A',\mu_B')$ should not exist. Details are relegated to the Appendix.
\end{proof}
\medskip

Uniformity reduces the strategic choice to a \emph{one-dimensional} decision: each player chooses a scalar effort level $x_i$ that is then allocated equally across all battles. However, even in this reduced game, first-order conditions alone do not guarantee equilibrium, because payoffs need not be globally concave; when concavity fails, the best-response correspondence can become multi-peaked.

\subsection{Characterizing Pure-Strategy Equilibrium (If Exists)}\label{subsec:pure-strategy-characterization}

In this subsection, model primitives $n,r,c_A,c_B$ and prize allocation rule $v$ are fixed. Suppose a pure-strategy equilibrium exists; then it must be pinned down by first-order conditions. Denote $x_A^*$ and $x_B^*$ as the equilibrium (uniform) strategy for player $A$ and $B$, respectively. Let $\bar{p}_A$ and $\bar{p}_B$ denote the equilibrium winning probability when two players compete in a simultaneous common-value contest, which are determined by the model primitives $r,c_A,c_B$:
\begin{equation*}
\bar{p}_A\triangleq\frac{c_B^{r}}{c_A^{r}+c_B^{r}},\quad \bar{p}_B\triangleq\frac{c_A^{r}}{c_A^{r}+c_B^{r}}.\footnote{Consider a common-value contest with valuation $V$ and let $x_A,x_B$ denote players' strategies. Player $i$'s payoff is $\pi_i=\frac{x_i^r}{x_A^r+x_B^r}V-c_ix_i$. The first-order condition for player $i$ is $\frac{rx_A^rx_B^r}{x_A^r+x_B^r}V=c_ix_i$. Combining first-order conditions for both players, we obtain $c_Ax_A=c_Bx_B$, implying that $\bar{p}_A\triangleq\frac{x_A^r}{x_A^r+x_B^r}=\frac{c_B^{r}}{c_A^{r}+c_B^{r}}$.}
\end{equation*}

The following result shows that player $i$'s winning probability in each battle is $\bar{p}_i$ for $i=A, B$.

\begin{lemma}\label{lem:indep}
If a pure-strategy equilibrium exists, then (i) player $A$ (resp. $B$) wins each battle with probability $\bar{p}_A$ (resp. $\bar{p}_B$): $\frac{(x_A^*)^{r}}{(x_A^*)^{r}+(x_B^*)^{r}}=\bar{p}_A,\,\frac{(x_B^*)^{r}}{(x_A^*)^{r}+(x_B^*)^{r}}=\bar{p}_B$; and (ii) battle outcomes are \textbf{mutually independent}.
\end{lemma}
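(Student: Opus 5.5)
By \autoref{lem:Uniform Strategy}, any pure-strategy equilibrium of $\mathcal{G}$ is uniform. So we may write it as $(x_A^*,x_B^*)$ with $x_{i(j)}=x_i^*$ for every battle $j$. The plan is to work directly with the first-order conditions in the restricted game $\tilde{\mathcal{G}}$ and show that they force $c_Ax_A^*=c_Bx_B^*$. This single relation is what delivers part (i).

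Part (ii) is the easy one and we settle it first. Under pure strategies, each battle is resolved by its own independent Tullock lottery once efforts are fixed. Hence the outcomes are mutually independent Bernoulli trials, each with success probability $p=(x_A^*)^r/((x_A^*)^r+(x_B^*)^r)$. The number of wins is therefore Binomial$(n,p)$ for player $A$.

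For (i), write player $A$'s payoff in $\tilde{\mathcal{G}}$ as
\[
\pi_A(x_A,x_B)=\sum_{k=0}^n \binom{n}{k}p^k(1-p)^{n-k}v(k)-nc_Ax_A, \qquad p=\frac{x_A^r}{x_A^r+x_B^r}.
\]
Let $G(p)=\sum_k\binom nk p^k(1-p)^{n-k}v(k)$. Then player $B$'s expected prize is $1-G(p)$ by budget balance, since $\mathbb{P}_B(k)=\mathbb{P}_A(n-k)$ and $v(n-k)=1-v(k)$.

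Next we compute $\partial p/\partial x_A=r p(1-p)/x_A$ and $\partial p/\partial x_B=-rp(1-p)/x_B$. The two first-order conditions then read
\[
G'(p)\,\frac{rp(1-p)}{x_A^*}=nc_A,\qquad G'(p)\,\frac{rp(1-p)}{x_B^*}=nc_B.
\]

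Dividing the two conditions gives $c_Ax_A^*=c_Bx_B^*$, i.e.\ $x_A^*/x_B^*=c_B/c_A$. Substituting into $p$ then yields $p=c_B^r/(c_A^r+c_B^r)=\bar p_A$, and $1-p=\bar p_B$.

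The main obstacle is justifying that the first-order conditions hold with equality at an interior point. This needs two facts: that $x_A^*,x_B^*>0$, and that $G'(p)>0$ so neither side is degenerate.

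I would first rule out $x_B^*=0$. If so, then $x_A^*>0$ is suboptimal for $A$, because any smaller positive effort still wins every battle with probability one. So $x_A^*=0$ as well, giving $p=1/2$ and expected prize $1/2$ for $A$. But then $A$ deviating to a tiny $\epsilon>0$ wins all battles and obtains $v(n)=1$ at cost $nc_A\epsilon$, a profitable deviation. Hence both efforts are strictly positive, the payoffs are differentiable there, and the first-order conditions must hold with equality.

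Positivity of $G'$ follows from $G'(p)=n\sum_{k}\binom{n-1}{k}p^k(1-p)^{n-1-k}[v(k+1)-v(k)]$. Each increment is nonnegative by monotonicity, and the increments sum to $v(n)-v(0)=1>0$. Thus $G'(p)>0$ for $p\in(0,1)$, so the ratio step above is legitimate.
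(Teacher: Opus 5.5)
Your proposal is correct and follows the paper's own argument. You write both first-order conditions for the uniform efforts and use budget balance to see that the marginal-prize term $n\sum_k\binom{n-1}{k}p^k(1-p)^{n-1-k}[v(k+1)-v(k)]$ is common to both players. Taking the ratio then gives $c_Ax_A^*=c_Bx_B^*$ and hence $p=\bar p_A$.

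Your checks that both efforts are strictly positive and that this common term is strictly positive add rigor the paper leaves implicit; it notes interiority only in a later footnote. One small caveat: part (ii) of \autoref{lem:Uniform Strategy} presupposes an equilibrium of $\tilde{\mathcal{G}}$, so on its own it does not show that every pure equilibrium of $\mathcal{G}$ is uniform. Nothing is lost, though, because the lemma, like the paper's proof, is stated for the uniform profile $(x_A^*,x_B^*)$.
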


\begin{proof}
In a pure-strategy equilibrium with both players playing a uniform strategy, \autoref{lem:indep}(ii) is straightforward, and player $A$ wins each battle with the same probability. To show that this probability is $\bar{p}_A$, we prove that $c_Ax_A^*=c_Bx_B^*$. Details are relegated to the Appendix.
\end{proof}
\medskip

Based on \autoref{lem:indep}, we pin down the \emph{prize spread} of each battle, denoted by $\mathcal{V}$. It is defined as the marginal expected prize of winning an additional battle, holding the outcomes of the other $n-1$ battles fixed at their equilibrium winning probabilities (i.e., outcomes of the remaining $n-1$ battles are independently drawn according to winning probabilities $\bar{p}_A,\bar{p}_B$):
\begin{equation}\label{eq:prizespread}
\mathcal{V}\triangleq\sum_{k=0}^{n-1} \binom{n-1}{k}\bar{p}_A^{k}\bar{p}_B^{n-k-1}[v(k+1)-v(k)].
\end{equation}
For example, when $n$ is odd and the majority rule $v_\text{MR}$ is adopted, the corresponding prize spread $\mathcal{V}=\binom{n-1}{\frac{n-1}{2}}\bar{p}_A^{\frac{n-1}{2}}\bar{p}_B^{\frac{n-1}{2}}$, equaling to the probability that each contestant wins half of the other $n-1$ battles.

\begin{proposition}\label{claim:equilibrium}
If a pure-strategy equilibrium exists, the equilibrium strategies are
\[
x_A^*=\frac{r\bar{p}_A\bar{p}_B\mathcal{V}}{c_A},\qquad x_B^*=\frac{r\bar{p}_A\bar{p}_B\mathcal{V}}{c_B}.
\]
\end{proposition}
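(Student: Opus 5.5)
By \autoref{lem:Uniform Strategy}, any pure-strategy equilibrium of $\mathcal{G}$ is uniform. We may therefore take $x_{A(j)}=x_A^*$ and $x_{B(j)}=x_B^*$ for all $j$, with both strictly positive. Positivity holds because if one player exerted zero effort, the other would gain by deviating to an arbitrarily small positive effort, given $\mathcal{V}>0$ or any nontrivial spread; degenerate cases are handled separately. The plan is to write player $A$'s first-order condition with respect to effort in a single battle $j$, holding all other coordinates at the equilibrium levels. Interior optimality in $\mathcal{G}$ requires that this partial derivative vanish. We then combine it with player $B$'s analogous condition and invoke \autoref{lem:indep}.

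For the first step, we decompose $A$'s expected prize by conditioning on battle $j$. Let $q=\frac{x_{A(j)}^r}{x_{A(j)}^r+x_B^{*r}}$ be the winning probability in battle $j$. The other $n-1$ battles are independent, and by \autoref{lem:indep} player $A$ wins each of them with probability $\bar p_A$. Hence
\[
\sum_k \mathbb{P}_A(k)v(k)=\sum_{k=0}^{n-1}\binom{n-1}{k}\bar p_A^k\bar p_B^{n-1-k}\bigl[q\,v(k+1)+(1-q)v(k)\bigr],
\]
which is affine in $q$ with slope exactly $\mathcal{V}$ from \eqref{eq:prizespread}. Differentiating $q$ with respect to $x_{A(j)}$ gives $\frac{r x_{A(j)}^{r-1}x_B^{*r}}{(x_{A(j)}^r+x_B^{*r})^2}$. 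Evaluated at $x_{A(j)}=x_A^*$, this equals $\frac{r}{x_A^*}\,\bar p_A\bar p_B$. The first-order condition is therefore $\frac{r\bar p_A\bar p_B\mathcal{V}}{x_A^*}=c_A$, which yields the claimed $x_A^*$.

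For player $B$, the derivative of $B$'s expected prize with respect to his probability of winning battle $j$ is, by budget balance $v(k)+v(n-k)=1$, the same spread $\mathcal{V}$. To see this, reindex $k\mapsto n-1-k$: $B$ wins $n-1-k$ of the other battles when $A$ wins $k$, and $v(n-k)-v(n-1-k)=v(k+1)-v(k)$. The marginal winning probability is again $\frac{r}{x_B^*}\bar p_A\bar p_B$, which gives $x_B^*=r\bar p_A\bar p_B\mathcal{V}/c_B$.

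The main obstacle I expect is justifying that the first-order conditions must hold with equality. This requires showing that the equilibrium efforts are interior, and dealing with the fact that the Tullock success function is not differentiable when both efforts are zero. We also need care with the self-consistency of using $\bar p_A$ from \autoref{lem:indep}. If the paper's proof of that lemma already relies on the first-order conditions, then the argument above should instead be presented jointly: derive both first-order conditions with a generic $p=x_A^{*r}/(x_A^{*r}+x_B^{*r})$, divide them to get $c_Ax_A^*=c_Bx_B^*$, conclude $p=\bar p_A$, and then substitute back.
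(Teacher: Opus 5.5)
Your proof is correct and is essentially the paper's argument: in the proof of \autoref{lem:indep} the paper writes both players' first-order conditions with a generic $p$ and uses budget balance to show the two marginal-prize terms coincide (your reindexing $v(n-k)-v(n-1-k)=v(k+1)-v(k)$). It then obtains $c_Ax_A^*=c_Bx_B^*$, hence $p=\bar p_A$, and the stated efforts follow at once. This is precisely your fallback ordering, although citing \autoref{lem:indep} directly, as in your main line, is not circular either.

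The only real difference is cosmetic. You differentiate with respect to effort in a single battle $j$ by conditioning on that battle, whereas the paper differentiates the uniform payoff $\sum_k\binom{n}{k}p^k(1-p)^{n-k}v(k)-nc_Ax_A$ along the diagonal, which at a uniform profile is just $n$ times your condition. One small caveat: \autoref{lem:Uniform Strategy}(ii) only yields uniformity when $\tilde{\mathcal{G}}$ has an equilibrium, so your opening claim is stronger than that proposition gives. Like the paper, you are in effect restricting attention to uniform pure-strategy equilibria.
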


Here, $x_A^*$ and $x_B^*$ equal the equilibrium efforts for a simultaneous contest with a common-value $\mathcal{V}$.

Thus far, we have characterized the pure-strategy equilibrium assuming its existence. However, first-order conditions are generally necessary but not sufficient. To understand the difficulty of characterizing pure-strategy equilibrium, the following example illustrates two possibilities that invalidate solutions obtained from first-order conditions: \emph{local minimum} and \emph{local maximum}.

\begin{example}\label{example}
Consider $n=20$, $r=0.8$, $c_A=1$ and $c_B=1.5$. The entire prize is allocated to one player if he wins at least 17 battles; if neither player does, the prize would be split equally. 
We can pin down $\bar{p}_A=\frac{1}{1+(c_A/c_B)^r}=0.5804$. The corresponding uniform strategy profile is
\[
x_A^*=\frac{r\bar{p}_A(1-\bar{p}_A)\mathcal V(\bar{p}_A)}{c_A}=0.001173,
\qquad
x_B^*=\frac{r\bar{p}_A(1-\bar{p}_A)\mathcal V(\bar{p}_A)}{c_B}=0.000782,
\]
where $\mathcal V(\bar{p}_A)=\tfrac12\,\theta(3\mid19,\bar{p}_A)+\tfrac12\,\theta(16\mid19,\bar{p}_A)=0.006023$. However, it is \textbf{not} an equilibrium.
\end{example}

For \autoref{example}, \autoref{fig:counterexample20} plots each player’s payoff as a function of their own effort, holding the opponent's effort fixed at the solution of first-order conditions $(x_A^*,x_B^*)$. Vertical dashed lines represent the solution to first-order conditions. For player $A$, $x_A^*$ is a \emph{local minimum} when player $B$ chooses $x_B^*$, and the global maximizer occurs at a strictly larger effort level. Meanwhile, player $B$'s best response to $x_A^*$ is much larger than $x_B^*$ as $x_B^*$ is a \emph{local maximum}. Hence, $(x_A^*,x_B^*)$ is not an equilibrium.

\begin{figure}[!htb]
\centering
\begin{subfigure}{0.48\textwidth}
    \centering
    \includegraphics[width=\linewidth]{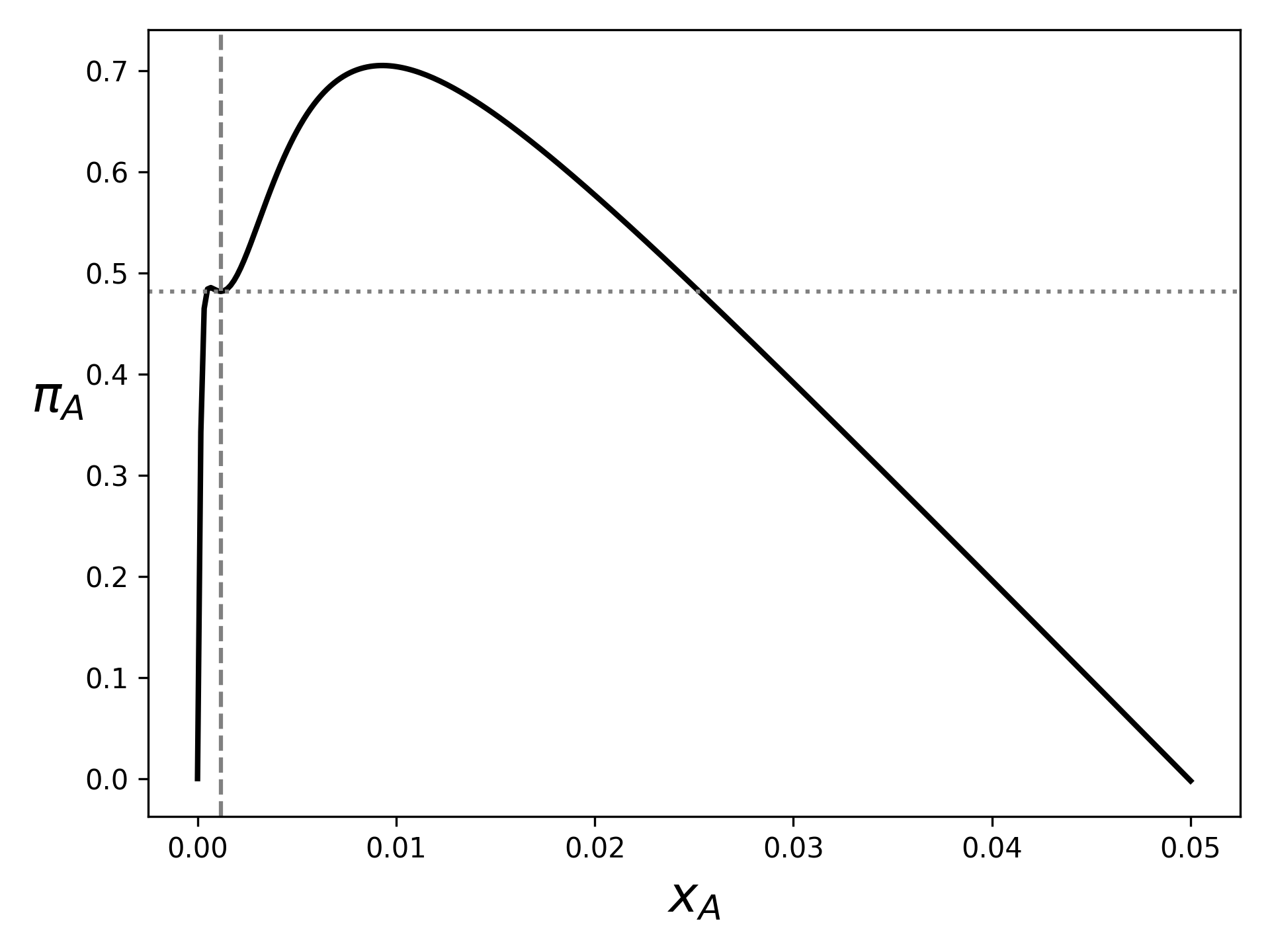}
    \caption{Player $A$'s payoff holding $x_B=x_B^*$.}
\end{subfigure}
\hfill
\begin{subfigure}{0.48\textwidth}
    \centering
    \includegraphics[width=\linewidth]{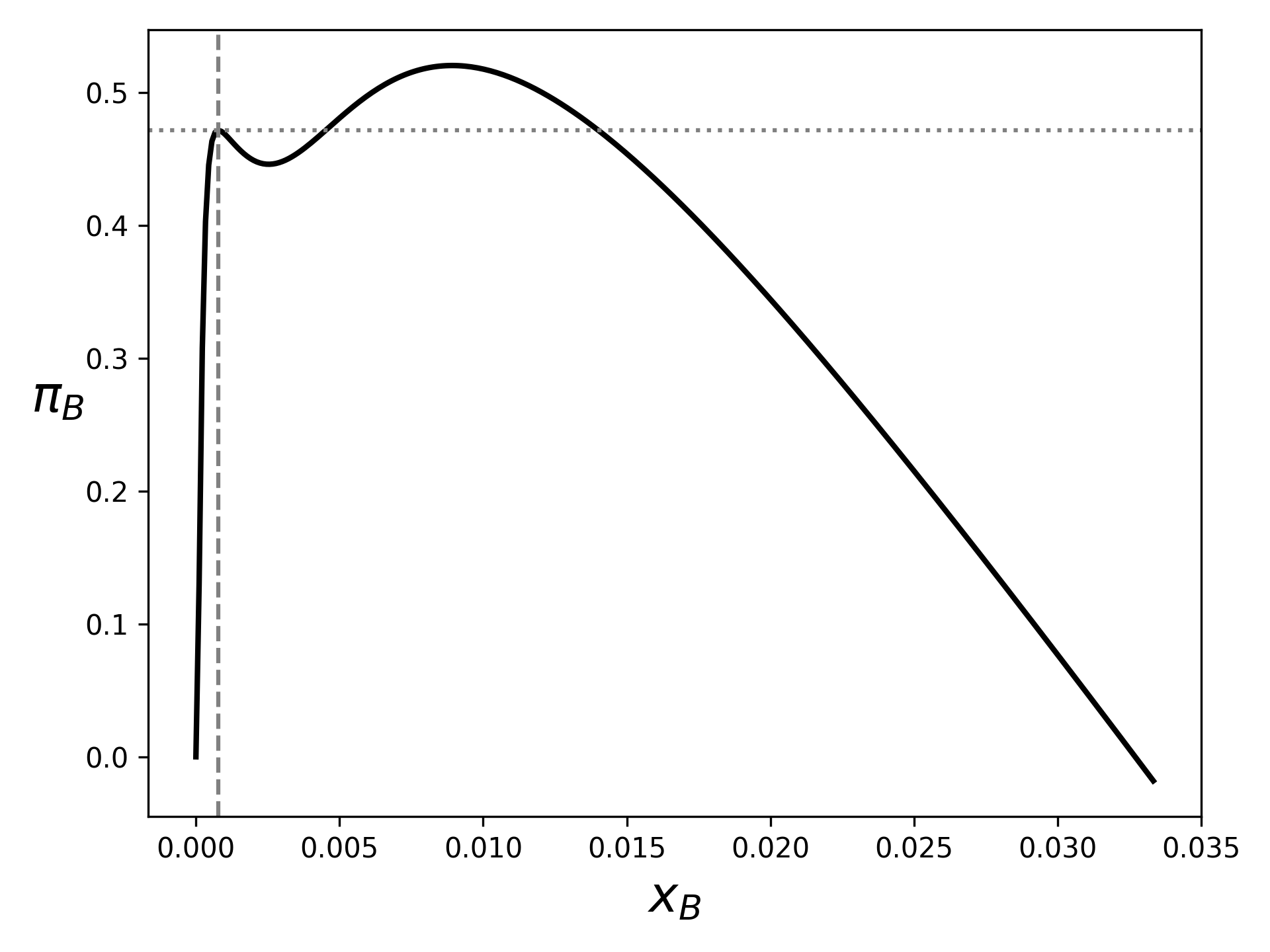}
    \caption{Player $B$'s payoff holding $x_A=x_A^*$.}
\end{subfigure}
\caption{Solutions to First-order Conditions and Players' Payoffs for \autoref{example}.}
\label{fig:counterexample20}
\end{figure}

\subsection{A Sufficient Condition for Pure-Strategy Equilibrium}\label{subsec:sufficient}

We provide a condition to ensure that the strategy profile $(x_A^*,x_B^*)$ is indeed an equilibrium. 

\begin{theorem}\label{Theo:existence}
When $r=R/\gamma\leq 2/(n+1)$, a pure-strategy equilibrium exists under \textbf{all} feasible rules.
\end{theorem}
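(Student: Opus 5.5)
By Proposition~\ref{lem:Uniform Strategy}(i), it suffices to show that the candidate profile $(x_A^*,x_B^*)$ of Proposition~\ref{claim:equilibrium} is an equilibrium of the restricted game $\tilde{\mathcal{G}}$. In $\tilde{\mathcal{G}}$ each player's strategy is a scalar. The plan is to fix $x_B=x_B^*$ and show that player $A$'s payoff, as a function of $x_A\ge 0$, is single-peaked with peak at $x_A^*$; the argument for $B$ is symmetric.

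\textbf{Reparametrization.} I would change variables to $A$'s per-battle winning probability $p=x_A^r/(x_A^r+(x_B^*)^r)\in[0,1)$. Then $x_A=x_B^*\,(p/(1-p))^{1/r}$, and the payoff is
\[
\Pi(p)=W(p)-n c_A x_B^*\Big(\tfrac{p}{1-p}\Big)^{1/r},\qquad W(p)=\sum_k\binom{n}{k}p^k(1-p)^{n-k}v(k).
\]
The standard Bernstein derivative identity gives $W'(p)=n\,\mathcal{V}(p)$, where $\mathcal{V}(p)=\sum_{k=0}^{n-1}\binom{n-1}{k}p^k(1-p)^{n-1-k}[v(k+1)-v(k)]\ge 0$. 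Differentiating the cost term, $\Pi'(p)$ has the same sign as
\[
\Phi(p)\;=\;r\,\mathcal{V}(p)\,p^{1-1/r}(1-p)^{1+1/r}\;-\;c_Ax_B^*.
\]
At the candidate, $x_B^*=r\bar p_A\bar p_B\mathcal{V}/c_B$, together with $c_Ax_A^*=c_Bx_B^*$ and $(x_A^*/x_B^*)^r=\bar p_A/\bar p_B$, gives $\Phi(\bar p_A)=0$. It therefore suffices to show that $g(p)=\mathcal{V}(p)\,p^{1-1/r}(1-p)^{1+1/r}$ is unimodal (increasing, then decreasing) on $(0,1)$. A sign-equivalent route that makes $\Phi$ itself decreasing past its zero would work equally well.

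\textbf{Exploiting the structure of feasible rules.} By budget balance, the increments $d_k=v(k+1)-v(k)\ge 0$ satisfy $d_k=d_{n-1-k}$. Hence
\[
\mathcal{V}(p)=\sum_{k\le (n-1)/2} d_k\big[p^k(1-p)^{n-1-k}+p^{n-1-k}(1-p)^k\big]\binom{n-1}{k}.
\]
The function $g$ is then a nonnegative combination of terms $h_k(p)=p^{k+1-1/r}(1-p)^{n-k+1/r}+p^{n-k-1/r}(1-p)^{k+1+1/r}$.

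Unimodality is not preserved under positive combinations in general. So I would instead work with the logarithmic derivative: show that $p(1-p)g'(p)/g(p)$ is strictly decreasing, or more simply that $g'$ changes sign only once. For each symmetric pair, writing $t=p/(1-p)$, the term becomes $(1-p)^{n+1}\,t^{a}(1+t^{b})$ with exponents $a=k+1-1/r$ and $b=n-1-2k$. The key inequality to check is that the exponent gap $b\le n-1$ is dominated by the steepness $1/r\ge (n+1)/2$, so that $t^{a}(1+t^{b})$ is log-concave in $\log t$ after accounting for the factor $(1-p)^{n+1}=(1+t)^{-(n+1)}$. The threshold $r\le 2/(n+1)$ should come from comparing $b/2\le (n-1)/2$ with $1/r-1$: the derivative of $\log(1+t^b)$ in $\log t$ ranges over $[0,b]$, and this must not overturn the decrease produced by $-(n+1)\,t/(1+t)$ and the $t^{-1/r}$ factor near the relevant region.

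\textbf{Expected main obstacle.} The hard step is extending single-crossing from individual terms to an arbitrary nonnegative combination. I would try to show that $\Psi(p)=p^{1/r}(1-p)^{-1/r}\big/\big[p(1-p)\,\mathcal V(p)\big]$, the ratio of marginal cost to marginal benefit in $p$-space, is increasing whenever it crosses the constant level. Equivalently, each term $p^{k}(1-p)^{n-1-k}$ divided by $p^{1/r-1}(1-p)^{-1/r-1}$ should be handled via a Descartes-rule or variation-diminishing argument on a polynomial in $t$. Concretely, $\Phi(p)=0$ reduces to $\sum_k \binom{n-1}{k}d_k t^{k+1-1/r}(1+t)^{-(n+1)}=C$. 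The claim would then be that the derivative numerator,
\[
\sum_k\binom{n-1}{k}d_k\,t^{k}\big[(k+1-1/r)(1+t)-(n+1)t\big],
\]
is positive and then negative, i.e.\ has a single sign change. The bracket has coefficient $(k-n-1/r)<0$ on $t$ and constant term $k+1-1/r$. Under $1/r\ge (n+1)/2$, this constant term is negative for all $k<(n-1)/2$, which is what is needed to control sign changes together with the symmetry $d_k=d_{n-1-k}$. Pairing $k$ with $n-1-k$ and bounding the mixed coefficients is where I expect the real work, and where the bound $2/(n+1)$ should be exactly tight.
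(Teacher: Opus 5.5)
\textbf{The target is wrong, and that is a genuine gap.} Your reparametrization is correct and matches the paper's setup. With the roles of $A$ and $B$ swapped, the paper's sign-equivalent function in $u=\ln x_B$ is $G=\ln\bigl(r\,g(p)\bigr)-\ln(c_Bx_A^*)$, with your $g(p)=\mathcal V(p)p^{1-1/r}(1-p)^{1+1/r}$.

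The problem is the reduction ``it suffices that $g$ be unimodal (increasing, then decreasing).'' This is false. A hump-shaped $g$ can cross the level $c_Ax_B^*/r$ twice, at points $p_1<p_2$, and then $\Pi$ is decreasing on $(0,p_1)$. Two bad cases follow:
\begin{itemize}
\item If the candidate $\bar p_A$ is the lower crossing, it is a local \emph{minimum} of $\Pi$. This is exactly the failure shown for player $A$ in \autoref{example}.
\item If it is the upper crossing, $\Pi(0)>\Pi(\bar p_A)$ is not excluded, so deviating to zero effort may pay.
\end{itemize}
For the same reason, showing that your numerator $N(t)$ is ``positive, then negative'' would not close the argument. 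That sign pattern also does not occur. By symmetry, the smallest $k_0$ with $d_{k_0}>0$ satisfies $k_0\le (n-1)/2$. So the leading coefficient obeys $k_0+1-1/r\le k_0-(n-1)/2\le 0$, and $N<0$ near $t=0$.

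\textbf{The missing idea.} Under $r\le 2/(n+1)$, $g$ is strictly \emph{decreasing} on all of $(0,1)$, i.e.\ $N(t)<0$ for every $t>0$. Hence $\Phi$ has a unique zero, crossed from above. This comes from an elasticity bound, not a sign-change count.
\begin{itemize}
\item Let $\bar k(p)$ be the mean of $k$ under the weights $\binom{n-1}{k}d_kp^k(1-p)^{n-1-k}$. Then $p(1-p)(\ln g)'(p)=\bar k(p)+1-(n+1)p-1/r$.
\item For $p>1/2$: always $\bar k\le n-1$.
\item For $p\le 1/2$: the symmetry $d_k=d_{n-1-k}$ together with $p^k(1-p)^{n-1-k}\ge p^{n-1-k}(1-p)^k$ for $k\le (n-1)/2$ gives $\bar k\le (n-1)/2$.
\item In both cases $\bar k+1-(n+1)p<(n+1)/2\le 1/r$.
\end{itemize}
This is the content of \autoref{lem:V'V}.

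Once monotonicity is the target, the obstacle you anticipated disappears. Each symmetric pair term $h_k$ is itself strictly decreasing, by the same bound with weights supported on $\{k,\,n-1-k\}$. A nonnegative combination of strictly decreasing functions is strictly decreasing. So no Descartes-rule or variation-diminishing argument on mixed coefficients is needed.
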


It is noteworthy that the condition $r=R/\gamma\leq 2/(n+1)$ is sufficient not only for any prize allocation rule $v(\cdot)$ but also for arbitrary levels of contestants’ asymmetry (i.e., any $c_A,c_B$). 

The economic interpretation of \autoref{Theo:existence} is as follows. For a single battle ($n=1$), the condition reduces to $r\leq1$, which echoes standard results in the literature: $r=1$ is the maximum discriminatory power to guarantee pure-strategy equilibrium regardless of asymmetry between linear-cost players in a one-shot simultaneous contest. When there are multiple battles ($n>1$), a player's uniform effort is allocated across all dimensions, meaning a small increase in effort raises their winning probability in every battle at the same time. It thus amplifies the \emph{marginal benefit} of \emph{per-battle winning probability $p$}. Hence, the mapping from per-battle winning probability $p$ to the expected prize tends to lose single-peakedness (i.e., multiple local extrema). For example, consider how the impact of per-battle winning probability $p$ on the probability of winning at least $k$ battles out of $n$ battles, denoted by $H(k|n,p)$; when $p$ is below a certain threshold $\tilde{p}_{n,k}$, the second-order derivative $\partial^2H(k|n,p)/\partial p^2>0$, indicating local convexity of the function.\footnote{Let $\theta(k|n,p)\triangleq\binom{n}{k}p^k(1-p)^{n-k}$, we have $\partial^2H(k|n,p)/\partial p^2=n\theta(k-1|n-1,p)\frac{k-(n-1)p}{p(1-p)}$.} 

What is more, the above \emph{amplification effect} (e.g., non-concavity of the mapping $p\mapsto H$) is more significant with a larger $n$, as the number of winning battles converges to a normal distribution. To offset such an effect and ensure the marginal benefit of effort declines rapidly enough, the winner-selection mechanism must become noisier (a smaller $R$) or the cost structure becomes more convex (a larger $\gamma$), which renders the mapping from effort to the per-battle winning probability $p$ (i.e., the mapping $x\mapsto p$) more concave and restores the single-peakedness of the payoff function.


\medskip

We now prove \autoref{Theo:existence} for an arbitrary prize allocation rule $v$ and marginal costs $(c_A,c_B)$. The argument has three steps. First, we re-parameterize player $B$'s problem by substituting $u = \ln x_B$ and derive the first-order condition in terms of $u$. Second, we construct a \emph{sign-equivalent} auxiliary function $G(u)$: the payoff function is single-peaked if and only if $G$ is strictly decreasing. Third, we prove that $G'(u)<0$ whenever $r\leq 2/(n+1)$, leveraging a tight upper bound on the elasticity of the prize spread (\autoref{lem:V'V}). Since this bound holds for every feasible rule $v$ and every cost ratio $c_A/c_B$, the resulting threshold is universal. We develop each step in turn.

\paragraph{Simplified Notations.} If player $i$ wins each battle with probability $p$, let $\theta(k|n,p)\triangleq\binom{n}{k}p^k(1-p)^{n-k}$ (resp. $H(k|n,p)\triangleq\sum_{t=k}^n\theta(t|n,p)$) denotes the probability that $i$ wins exactly (resp. at least) $k$ battles out of $n$ battles. For a feasible rule $v$, let $\Delta v(t)=v(t+1)-v(t)\ge0$. Using newly introduced notations, the expected prize obtained by player $i$ is $\sum_{t=0}^{n-1}\Delta v(t)H(t+1|n,p)$. We can view the \emph{prize spread} $\mathcal{V}$, defined in \eqref{eq:prizespread}, as a function of $p$ if someone wins each battle with probability $p$:
\begin{equation*}
    \mathcal{V}(p)=\sum_{t=0}^{n-1}\Delta v(t)\theta(t\mid n-1,p).
\end{equation*}

We consider player $B$'s problem by fixing $x_A^*>0$. We aim to show that $\pi_B(x_B|x_A^*)$ is increasing when $x_B<x_B^*$ and decreasing when $x_B>x_B^*$.

\paragraph{Variable Substitutions.} Write $u=\ln x_B$ and treat $B$'s per-battle winning probability as a function of $u$: $p(u)\triangleq\frac{e^{ru}}{e^{ru}+(x_A^*)^r}\in(0,1)$ and $\partial p/\partial u=rp(1-p)$. Then, player $B$'s payoff can be rewritten as 
\[\pi_B(u|x_A^*)=\sum_{t=0}^{n-1}\Delta v(t)H(t+1|n,p(u))-n c_B e^u.
\]
Hence, the first-order derivative with respect to $u$ for player $B$ is
\[
\pi_B'(u|x_A^*)= n rp(1-p)\mathcal{V}(p)-nc_B e^u.
\]
given the fact that $\partial H(k|n,p)/\partial p=n\theta(k-1|n-1,p)$.



\paragraph{Sign-Equivalent Transformations.} To establish that $\pi_B(u)$ is single-peaked, it suffices to show that its first derivative $\pi_B'(u)$ \emph{crosses zero exactly once and from above}. For this argument, we are only concerned with the \emph{sign} of $\pi_B'(u)$, not its magnitude. Inspired by \cite{zenou2024sign}, we define an auxiliary function that is \emph{sign-equivalent} to $\pi_B'(u|x_A^*)$:
\[
G(u|x_A^*)\triangleq\ln\!\big(r\,p(1-p)\mathcal{V}(p)\big)-u-\ln c_B.
\]
Namely, $\pi_B'(u|x_A^*)=0$ (resp. $>0$ or $<0$) if and only if $G(u|x_A^*)=0$ (resp. $>0$ or $<0$). For example, when $\pi_B'(u|x_A^*)>0$, we have $rp(1-p)\mathcal{V}(p)>c_Be^u$ and thus $\ln\!\big(r\,p(1-p)\mathcal{V}(p)\big)>u+\ln c_B$. Differentiating $G(u|x_A^*)$ with respect to $u$ yields
\begin{equation}\label{eq:Fprime_app3}
G'(u|x_A^*)=r\left[(1-2p)+p(1-p)\frac{\mathcal{V}'(p)}{\mathcal{V}(p)}\right]-1.
\end{equation}


\paragraph{Remaining Task: Proving $G'(u|x_A^*)<0$.} Since $r>0$ is upper-bounded, we need to find an upper bound for the term in square brackets. To do so, we bound $\frac{\mathcal V'(p)}{\mathcal V(p)}$ from above in the following lemma.
\begin{lemma}\label{lem:V'V}
For all $p\in(0,1)$, $\frac{\mathcal V'(p)}{\mathcal V(p)}\le\frac{n-1}{2p(1-p)}$.
\end{lemma}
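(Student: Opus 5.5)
We must show $\mathcal V'(p)/\mathcal V(p)\le \frac{n-1}{2p(1-p)}$, where $\mathcal V(p)=\sum_{t=0}^{n-1}\Delta v(t)\,\theta(t\mid n-1,p)$ with $\Delta v(t)\ge 0$. The plan is to compute the derivative termwise, exploit the symmetry that budget balance imposes on the weights $\Delta v$, and then reduce to a pairwise inequality.

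\textbf{Step 1: termwise derivative.} Write $m=n-1$ and use
\[
\frac{\partial}{\partial p}\theta(t\mid m,p)=\theta(t\mid m,p)\,\frac{t-mp}{p(1-p)}.
\]
This gives
\[
p(1-p)\,\mathcal V'(p)=\sum_{t}\Delta v(t)\,\theta(t\mid m,p)\,(t-mp).
\]
The target inequality is therefore equivalent to
\[
\sum_t \Delta v(t)\,\theta(t\mid m,p)\,\Bigl(t-mp-\tfrac m2\Bigr)\le 0 .
\]
Since $t\le m$, each factor satisfies $t-mp-\frac m2\le m\bigl(\frac12-p\bigr)$. So the termwise bound alone settles only $p\ge 1/2$. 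For small $p$ the factor is positive for large $t$, and some extra structure is needed.

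\textbf{Step 2: symmetry of the weights.} Budget balance $v(k)+v(n-k)=1$ implies $\Delta v(t)=v(t+1)-v(t)=v(n-t)-v(n-t-1)=\Delta v(m-t)$. Hence the weights are symmetric about $m/2$. I would pair the index $t$ with $m-t$ and show that each pair contributes nonpositively. Write $s=t$, $\bar s=m-s$, and assume $s\ge m/2$ without loss of generality. The pair's contribution is proportional to
\[
p^{s}q^{\bar s}\Bigl(s-mp-\tfrac m2\Bigr)+p^{\bar s}q^{s}\Bigl(\bar s-mp-\tfrac m2\Bigr),\qquad q=1-p.
\]
Let $d=s-m/2\ge0$ and factor out $(pq)^{m/2}$. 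With $\rho=p/q$, it remains to show
\[
\rho^{d}(d-mp)+\rho^{-d}(-d-mp)\le0,
\]
which is equivalent to
\[
d\bigl(\rho^d-\rho^{-d}\bigr)\le mp\bigl(\rho^d+\rho^{-d}\bigr).
\]

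\textbf{Step 3: the pairwise inequality.} For $p\ge1/2$ this follows from Step 1. In general, set $\rho=e^{2y}$. The pairwise condition becomes
\[
d\tanh(2dy)\le mp,\qquad \text{where } p=\frac{e^{y}}{e^{y}+e^{-y}}.
\]
Since $0\le d\le m/2$ and $\tanh$ is increasing in its argument, $d\tanh(2dy)\le \frac m2\tanh(my)$ when $y\ge0$. When $y<0$ the left side is nonpositive and the inequality is trivial. So for $y\ge 0$ it suffices to show
\[
\frac12\tanh(my)\le p=\frac{1+\tanh y}{2},
\]
which holds because $\tanh(my)\le 1$.

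With $y\ge0$ identified as the case $p\ge1/2$, the pairwise terms are nonpositive for all $p$. A middle unpaired term $t=m/2$, which exists when $m$ is even, contributes $-mp\,\Delta v\,\theta\le0$. Summing over all pairs gives the lemma. When $\mathcal V(p)=0$ the ratio is not defined, and the lemma is read for rules with $\mathcal V>0$; this holds whenever $v$ is nonconstant, since every $\Delta v\ge 0$ and $v(n)-v(0)=1$.

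\textbf{Main obstacle.} I expect the sign bookkeeping in Step 3 to be the delicate part. If a gap appears there, I would fall back on the direct monotonicity of $d\mapsto d(\rho^d-\rho^{-d})/(\rho^d+\rho^{-d})$ and compare it to its value at $d=m/2$.
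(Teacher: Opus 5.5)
Your proof is correct and is essentially the paper's argument. The case $p\ge 1/2$ follows from $t\le n-1$. For $p<1/2$, you pair $t$ with $n-1-t$, using $\Delta v(t)=\Delta v(n-1-t)$ from budget balance and the fact that the larger index carries the smaller binomial weight; your $\rho<1$ case is exactly the paper's pairing step $\theta(t)\ge\theta(t')$ for $t\le t'$. The $\tanh$ computation for $y\ge 0$ is redundant, since your Step 1 already makes every individual term nonpositive when $p\ge 1/2$.
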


\begin{proof}
See the Appendix.
\end{proof}
\medskip

Note that $\frac{p\mathcal V'(p)}{\mathcal V(p)}$ is the \emph{elasticity} of the prize spread $\mathcal{V}$ with respect to the per-battle winning probability $p$. \autoref{lem:V'V} establishes that this elasticity is bounded above: the prize spread cannot respond too steeply to changes in $p$. Crucially, however, the bound is increasing in the number of battles $n$, reflecting the \emph{amplification effect} whereby more battles allow small changes in per-battle performance to compound into large changes in overall stakes. Hence, a smaller discriminatory power $r$ is required to offset this amplification and ensure $G'(u)<0$. This is precisely why the threshold tightens from $r\leq1$ to $r\leq2/(n+1)$ as $n$ increases.

Technically, based on \autoref{lem:V'V}, we can prove $G'(u|x_A^*)<0$. For all $p\in(0,1)$,
\begin{equation}\label{eq:bracket_bound_sharp}
(1-2p)+p(1-p)\frac{\mathcal V'(p)}{\mathcal V(p)}
\le (1-2p)+\frac{n-1}{2}<\frac{n+1}{2}.
\end{equation}
Substituting \eqref{eq:bracket_bound_sharp} into \eqref{eq:Fprime_app3} yields
\[
G'(u\mid x_A^*) < r\cdot \frac{n+1}{2}-1 \le 0
\quad\text{whenever}\quad
r\le \frac{2}{n+1}.
\]

Note that $G(u|x_A^*)$ admits a zero point because $\pi_B'(u|x_A^*)=0$ has a zero point $u=\ln x_B^*$. Since $G(u|x_A^*)$ is \emph{strictly decreasing} in $u$, $G$ must be positive before the zero point and negative afterwards. By the sign-equivalence between $\pi_B'$ and $G$, $\pi_B'(u|x_A^*)$ must be positive when $u<\ln x_B^*$ and negative when $u>\ln x_B^*$. This ensures that $x_B^*$ is the \emph{unique} best response to $x_A^*$. By symmetry of the argument, we can verify that $x_A^*$ is the unique best response to $x_B^*$. This finishes the proof of \autoref{Theo:existence}.

\begin{remark}\label{rem:KP-preview}
Our sufficient condition $r \leq 2/(n+1)$ is more restrictive than the condition of \cite{Klumpp2006Primaries}, which applies only to symmetric players under the majority rule. The gap reflects the cost of universality. A detailed comparison appears in \autoref{subsec:KP06}.
\end{remark}

\subsection{A Necessary Condition for Pure-Strategy Equilibrium}\label{subsec:necessary}

In this subsection, we provide a universal necessary condition for the existence of a pure-strategy equilibrium given $n,\,r$ and $v(\cdot)$. Then, we use this necessary condition to show that the condition in \autoref{Theo:existence} is (almost) necessary.

Recall that the expected prize obtained by \emph{weak} player $B$ is $\sum_{t=0}^{n-1}\Delta v(t)H(t+1|n,p_B)$, and the \emph{prize spread} for each battle is $\sum_{t=0}^{n-1}\theta(t|n-1,p_B)\Delta v(t)$. Hence, given the prize allocation rule $v$ and $\bar{p}_B$, define player $B$'s equilibrium payoff as 
\[
\pi_B^*(v|n,r)=\sum_{t=0}^{n-1}\Delta v(t)H(t+1|n,\bar{p}_B)-nr\bar{p}_B(1-\bar{p}_B)\sum_{t=0}^{n-1}\theta(t|n-1,\bar{p}_B)\Delta v(t).
\]

For the strategy profile characterized in \autoref{subsec:pure-strategy-characterization} to be an equilibrium, the payoff of the \emph{weak} player $B$ must be non-negative. Otherwise, player $B$ would deviate to zero to secure a zero payoff.

\begin{lemma}\label{lem:necessary}
Given $n$ and $r$, a pure-strategy equilibrium exists under the rule $v$ \textbf{only if} $\pi_B^*(v|n,r)\geq0$.
\end{lemma}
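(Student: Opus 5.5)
The plan is to argue by contraposition. Suppose a pure-strategy equilibrium of the linear-cost game $\mathcal{G}$ exists under the rule $v$. By \autoref{lem:Uniform Strategy} we may work with uniform strategies: the restricted game $\tilde{\mathcal{G}}$ then has an equilibrium (part (i) of that proposition runs in the converse direction, but a pure equilibrium of $\mathcal{G}$ is uniform by part (ii), and a uniform profile that is an equilibrium of $\mathcal{G}$ is trivially an equilibrium of $\tilde{\mathcal{G}}$). By \autoref{claim:equilibrium} this equilibrium must be $(x_A^*,x_B^*)$ with $x_B^*=r\bar p_A\bar p_B\mathcal{V}/c_B$. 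By \autoref{lem:indep}, at this profile $B$ wins each battle independently with probability $\bar p_B$.

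Next I would compute $B$'s equilibrium payoff explicitly. The expected prize is $\sum_{k}\mathbb{P}_B(k)v(k)$. Summation by parts, using $v(0)=0$, rewrites it as $\sum_{t=0}^{n-1}\Delta v(t)H(t+1\mid n,\bar p_B)$. The cost is $n c_B x_B^* = n r\bar p_A\bar p_B \mathcal{V}$. Here I must check that $\mathcal{V}$ in \eqref{eq:prizespread} coincides with $\sum_t\theta(t\mid n-1,\bar p_B)\Delta v(t)$ as written in $\pi_B^*$. Equation \eqref{eq:prizespread} is written with $\bar p_A^k\bar p_B^{n-1-k}$, i.e.\ from $A$'s perspective. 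By budget balance, $\Delta v(t)=\Delta v(n-1-t)$, so reindexing $t\mapsto n-1-t$ shows the two expressions agree. This symmetry check is the only mildly delicate step. Hence $\pi_B(x_A^*,x_B^*)=\pi_B^*(v\mid n,r)$.

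Finally, consider the deviation $x_{B(j)}=0$ for all $j$. This costs nothing and yields a prize of at least $v(0)=0$ by non-negativity. Since $x_A^*>0$, $B$ then loses every battle and gets exactly $0$. Equilibrium requires no profitable deviation, so $\pi_B^*(v\mid n,r)\ge 0$. The case $\mathcal{V}=0$ needs no separate treatment: then $x^*=0$, $\pi_B^*$ equals the expected prize, which is nonnegative, and the inequality holds trivially.
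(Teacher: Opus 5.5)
Your core argument is the paper's own: at the profile $(x_A^*,x_B^*)$ from \autoref{claim:equilibrium}, $B$'s payoff equals $\pi_B^*(v\mid n,r)$, and deviating to zero effort guarantees $B$ at least $0$. Your summation by parts and the reindexing check $\Delta v(t)=\Delta v(n-1-t)$ are correct. The case $\mathcal V=0$ cannot occur, since every $\theta(t\mid n-1,\bar p_A)>0$ and $\sum_t\Delta v(t)=1$.

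The flawed step is the parenthetical reduction to uniform strategies, which is circular. Part (ii) of \autoref{lem:Uniform Strategy} assumes that $\tilde{\mathcal G}$ has an equilibrium. You use it to conclude that a pure equilibrium of $\mathcal G$ is uniform, and then infer from this that $\tilde{\mathcal G}$ has an equilibrium. Part (i)'s smoothing argument cannot close the loop either, because it only works against a uniform opponent.

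In the paper this step is unnecessary. The lemma refers to the pure uniform profile of \autoref{subsec:pure-strategy-characterization}, so you can simply drop the detour.

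If you want the stronger reading (any pure equilibrium of $\mathcal G$, possibly non-uniform), argue battle by battle:
\begin{itemize}
\item Let $D_j=\sum_t Q^{\neg j}_B(t)\Delta v(t)$, where $Q^{\neg j}_B(t)$ is the probability that $B$ wins exactly $t$ of the other $n-1$ battles. By budget balance, the same $D_j$ is $A$'s marginal prize from battle $j$.
\item If exactly one player exerts positive effort in some battle, he can halve it and still win that battle surely, so this is not an equilibrium.
\item Hence in equilibrium every battle is either both-positive or both-zero, so every battle outcome is nondegenerate and $D_j>0$ for all $j$. If both players exert zero in battle $j$, an $\epsilon$-deviation wins it outright at negligible cost. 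So all efforts are strictly positive.
\item The two first-order conditions in battle $j$ then give $c_Ax_{A(j)}=r\,p_{A(j)}p_{B(j)}D_j=c_Bx_{B(j)}$, so $p_{A(j)}=\bar p_A$ in every battle.
\item Consequently $D_j=\mathcal V$ for all $j$, and the profile is exactly $(x_A^*,x_B^*)$. From there your argument goes through unchanged.
\end{itemize}
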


Note that $\pi_B^*(v|n,r)\geq 0$ is not sufficient in general, as formally stated below. 

\begin{remark}
Given $n$ and $r$, a pure-strategy equilibrium \textbf{may not exist} under the rule $v$ \textbf{if} $\pi_B^*(v|n,r)\geq0$. \autoref{example} provides us a counterexample: In the solution to the first-order conditions, $\pi_B^*(v|n,r)>0$, but $(x_A^*,x_B^*)$ is not an equilibrium because $x_B^*$ is a local maximum given $x_A^*$.
\end{remark}

Having established a universal condition $\pi_B^*(v|n,r)\ geq 0$, we apply it to the simple majority rule. Note that when the number of battles is even, the majority rule is interpreted as $v_\text{even}(k)=\mathbf{1}(k>\frac{n}{2})+0.5\cdot \mathbf{1}(k=\frac{n}{2})$.

\begin{proposition}\label{prop:necessary}
A pure-strategy equilibrium \textbf{does not exist} under the majority rule as the contestants become arbitrarily asymmetric when $n$ is odd and $r>2/(n+1)$ or when $n$ is even and $r>2/n$. 
\end{proposition}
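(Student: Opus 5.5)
We will use \autoref{lem:necessary}: it suffices to show that, under the majority rule, $\pi_B^*(v|n,r)<0$ once $\bar p_B$ is close enough to $0$, which happens as $c_B/c_A\to\infty$ with $r>0$ fixed. The plan is to compute the expected prize and the equilibrium cost of the weak player $B$ explicitly and compare their leading-order behavior as $p\triangleq\bar p_B\to 0$.

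\textbf{Odd $n$.} Write $n=2m+1$, so that $\Delta v(t)=\mathbf{1}(t=m)$. Then player $B$'s expected prize is $H(m+1|n,p)$, and
\[
\pi_B^*=H(m+1|n,p)-nr\,p(1-p)\,\theta(m|n-1,p).
\]
As $p\to0$, the leading terms are:
\begin{itemize}
\item $H(m+1|n,p)=\binom{n}{m+1}p^{m+1}(1+O(p))$;
\item $np(1-p)\theta(m|n-1,p)=n\binom{n-1}{m}p^{m+1}(1+O(p))$.
\end{itemize}
Using $\binom{n}{m+1}=\frac{n}{m+1}\binom{n-1}{m}$, we obtain
\[
\pi_B^*=n\binom{n-1}{m}p^{m+1}\Big[\tfrac{1}{m+1}-r+O(p)\Big].
\]
Since $\frac{1}{m+1}=\frac{2}{n+1}$, the bracket is strictly negative for small $p$ whenever $r>2/(n+1)$. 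Hence $\pi_B^*<0$ for sufficiently asymmetric players, and no pure-strategy equilibrium exists.

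\textbf{Even $n$.} Write $n=2m$, so that $\Delta v(m-1)=\Delta v(m)=\tfrac12$. The expected prize is $\tfrac12 H(m|n,p)+\tfrac12H(m+1|n,p)$, whose leading term is $\tfrac12\binom{n}{m}p^m$. The prize spread is $\tfrac12[\theta(m-1|n-1,p)+\theta(m|n-1,p)]$, whose leading term is $\tfrac12\binom{n-1}{m-1}p^{m-1}$. The cost term is therefore $\tfrac{nr}{2}\binom{n-1}{m-1}p^{m}(1+O(p))$. Using $\binom{n}{m}=\frac{n}{m}\binom{n-1}{m-1}$, we get
\[
\pi_B^*=\tfrac n2\binom{n-1}{m-1}p^m\Big[\tfrac1m-r+O(p)\Big],
\]
which is negative for small $p$ whenever $r>1/m=2/n$.

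\textbf{Main point of care.} The argument needs the leading-order coefficients to be the dominant terms, and the $O(p)$ corrections to vanish uniformly as $p\to0$. This holds because all quantities are polynomials in $p$ of fixed degree. The remaining check is that $\bar p_B=c_A^r/(c_A^r+c_B^r)\to0$ as $c_B/c_A\to\infty$, which is immediate for $r>0$. Combining the negativity of $\pi_B^*$ with \autoref{lem:necessary} then completes the argument.
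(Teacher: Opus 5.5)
Your proposal is correct and follows the paper's own argument: invoke \autoref{lem:necessary}, factor out the common power $\bar p_B^{(n+1)/2}$, and compare $\binom{n}{(n+1)/2}$ with $rn\binom{n-1}{(n-1)/2}$ as $\bar p_B\to 0$, which gives the threshold $2/(n+1)$. Your explicit even-$n$ computation, with $\Delta v(m-1)=\Delta v(m)=\tfrac12$ giving the threshold $1/m=2/n$, fills in the case the paper dismisses as ``similar.''
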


\autoref{prop:necessary} suggests that the sufficient condition in \autoref{Theo:existence} is quite tight. For an odd $n$, $2/(n+1)$ represents the exact threshold ensuring equilibrium existence across all prize rules and arbitrary levels of contestant asymmetry. For an even $n$, $2/n$ is marginally larger than $2/(n+1)$, and their ratio tends to 1 as $n$ increases. Furthermore, \autoref{prop:necessary} implies that the majority rule with highly asymmetric contestants represents the most stringent scenario for the existence of a pure-strategy equilibrium. To guarantee equilibrium existence under an odd $n$ and the majority rule for arbitrary contestants' heterogeneity, the threshold remains $2/(n+1)$. 

\medskip
To prove \autoref{prop:necessary}, we first consider an odd $n$ and $v_\text{MR}$. Note that player $B$'s payoff can be written as
\[
\pi_B^*(v_\text{MR}|n,r)=\bar{p}_B^{\frac{n+1}{2}}(1-\bar{p}_B)^{\frac{n+1}{2}}\left(\sum_{k=\frac{n+1}{2}}^{n}\binom{n}{k}\bar{p}_B^{k-\frac{n+1}{2}}(1-\bar{p}_B)^{n-k-\frac{n+1}{2}}-\frac{n!r}{(\frac{n-1}{2})!(\frac{n-1}{2})!}\right).
\]
When $\bar{p}_B$ approaches zero, we have $\sum_{k=\frac{n+1}{2}}^{n}\binom{n}{k}\bar{p}_B^{k-\frac{n+1}{2}}(1-\bar{p}_B)^{n-k-\frac{n+1}{2}}$ converges to 
\[
\binom{n}{\frac{n+1}{2}}=\frac{n!}{(\frac{n+1}{2})!(\frac{n-1}{2})!}=\frac{n!\frac{2}{n+1}}{(\frac{n-1}{2})!(\frac{n-1}{2})!}.
\]
Hence, as long as contestants are sufficiently asymmetric (i.e., $\bar{p}_B\rightarrow0$), we have $\pi_B^*(v_\text{MR}|n,r)<0$. The analysis of even $n$ and $v_\text{even}$ is similar.

\begin{remark}
When $r>2/(n+1)$, pure‑strategy equilibria may exist under certain prize rules. For example, consider the proportional rule $v_\emph{prop}(k)=\frac{k}{n}$, under which battles are independent with valuation $1/n$ each. For this rule, a pure-strategy equilibrium exists for every $r\in(0,1]$ and arbitrary contestants' asymmetry.
\end{remark}


\subsection{Equilibrium Uniqueness}\label{subsec:unique}

In the preceding two subsections, we have analyzed the issue of equilibrium existence. In this subsection, we shift our focus to equilibrium uniqueness. Suppose that $r\leq 2/(n+1)$ and the pure-strategy equilibrium exists in $\tilde{\mathcal{G}}$. We can show that such an equilibrium must be the unique equilibrium in the linear-cost game $\mathcal{G}$.

\begin{theorem}\label{pro:unique}
When $r=R/\gamma\leq 2/(n+1)$, the pure-strategy uniform equilibrium characterized in \autoref{claim:equilibrium} is the \textbf{unique} equilibrium in the linear-cost game. This means that in the original game with parameters $R$ and $\gamma$, we have a unique equilibrium, in which 
\[
\hat{x}_{A(j)}=\hat{x}_A\triangleq\left(\frac{R\bar{p}_A\bar{p}_B\mathcal{V}}{\gamma c_A}\right)^{1/\gamma},\qquad \hat{x}_{B(j)}=\hat{x}_B\triangleq\left(\frac{R\bar{p}_A\bar{p}_B\mathcal{V}}{\gamma c_B}\right)^{1/\gamma},\ \ \  \forall j\in\mathcal{N}, 
\]
where $\bar{p}_A=\frac{c_B^{R/\gamma}}{c_A^{R/\gamma}+c_B^{R/\gamma}}$ and $ \bar{p}_B=\frac{c_A^{R/\gamma}}{c_A^{R/\gamma}+c_B^{R/\gamma}}$.
\end{theorem}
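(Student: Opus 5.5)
By \autoref{lem:Uniform Strategy}(ii), if $\tilde{\mathcal{G}}$ has an equilibrium, then every equilibrium of $\mathcal{G}$ is a uniform equilibrium. By \autoref{Theo:existence}, $(x_A^*,x_B^*)$ is a pure-strategy equilibrium of $\mathcal{G}$ and hence of $\tilde{\mathcal{G}}$. So it suffices to show that $\tilde{\mathcal{G}}$ has no other equilibrium, including mixed uniform ones. Transferring back to the original game is then routine. The map $\hat x\mapsto \hat x^\gamma$ is a bijection of $\mathbb{R}_+$ that preserves payoffs, so equilibria correspond one-to-one. Substituting $r=R/\gamma$ into \autoref{claim:equilibrium} and taking $\gamma$-th roots gives the displayed $\hat x_A,\hat x_B$.

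\textbf{Step 1: pure uniform equilibria.} Let $(x_A,x_B)$ be any pure equilibrium of $\tilde{\mathcal{G}}$. Both efforts must be positive. If $x_B=0$, player $A$'s best response is not well defined, since any $\epsilon>0$ wins every battle and a smaller $\epsilon$ does better. If $x_A=0$, the same argument applies to $B$. Hence first-order conditions hold, and \autoref{lem:indep} together with \autoref{claim:equilibrium} force $(x_A,x_B)=(x_A^*,x_B^*)$. This argument only used that an equilibrium is pure, so pure uniqueness is immediate.

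\textbf{Step 2: excluding mixed uniform equilibria.} This is the main obstacle. My plan is to exploit the interchangeability of equilibria already used in \autoref{lem:Uniform Strategy}(ii). Suppose $(F_A,F_B)$ is an equilibrium of $\tilde{\mathcal{G}}$. Interchangeability with $(x_A^*,x_B^*)$ makes $(F_A,x_B^*)$ an equilibrium as well. Then every point in the support of $F_A$ must be a best response to $x_B^*$. The single-peakedness argument in the proof of \autoref{Theo:existence}, with the roles of $A$ and $B$ swapped, shows that $x_A^*$ is the \emph{unique} best response to $x_B^*$. Therefore $F_A$ is degenerate at $x_A^*$, and symmetrically $F_B$ is degenerate at $x_B^*$.

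I need to check that interchangeability genuinely holds. The game is a two-player constant-sum prize game with separable costs: $\pi_A+\pi_B=1-c_Ax_A-c_Bx_B$ in every battle profile, by budget balance. It is therefore strategically equivalent to a zero-sum game. Define $\tilde\pi_A=\pi_A+c_Bx_B$, so that $\tilde\pi_A=-\tilde\pi_B+1$. Adding a term that depends only on the opponent's action does not change best responses, and in zero-sum games equilibria are interchangeable.

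If the paper's proof of \autoref{lem:Uniform Strategy}(ii) already establishes interchangeability for $\mathcal{G}$, I will simply cite it. In that case I will note only that mixed strategies with unbounded support are harmless: efforts above $1/c_i$ are strictly dominated, so one can restrict attention to compact strategy sets.
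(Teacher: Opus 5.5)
Your proposal is correct and follows essentially the same route as the paper. The paper likewise uses first-order conditions to pin down the unique pure uniform equilibrium and Proposition~\ref{lem:Uniform Strategy} to reduce to uniform equilibria, then combines interchangeability (its Lemma~\ref{lem:interchangeable}, the same idea as your constant-sum reformulation) with the unique best response from the proof of Theorem~\ref{Theo:existence} to rule out mixed equilibria; your explicit handling of zero efforts and bounded supports simply fills in details the paper relegates to a footnote.
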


To prove \autoref{pro:unique}, first note that the first-order conditions are necessary conditions for pure-strategy uniform equilibrium.\footnote{In a pure-strategy uniform equilibrium, no player chooses zero. Then, the utility function is continuous in players' strategies.} Given that the solution to the first-order conditions is unique (as shown in \autoref{subsec:pure-strategy-characterization}), no other pure-strategy uniform equilibrium exists. Considering that all equilibria should be uniform equilibria (\autoref{lem:Uniform Strategy}), it remains to rule out mixed-strategy uniform equilibria. During the proof of \autoref{Theo:existence}, we argue that the best response is unique in the equilibrium characterized in \autoref{subsec:pure-strategy-characterization}. Combined with the interchangeability of equilibrium, no mixed-strategy equilibrium exists.

We use the Venn diagram \autoref{fig:equilibrium-sets} to visualize the discussion about equilibrium uniqueness combining \autoref{lem:Uniform Strategy} and \autoref{pro:unique}. Let $\mathcal{E}$, $\mathcal{U}$, and $\mathcal{P}$ denote the sets of equilibrium, uniform equilibrium, and pure-strategy uniform equilibrium, respectively. \autoref{lem:Uniform Strategy} implies that $\mathcal{E}=\mathcal{U}$ if $\mathcal{U}\neq\varnothing$. During the proof of \autoref{pro:unique}, we first show that $|\mathcal{P}|=1$ and then show that $\mathcal{U}=\mathcal{P}$. Therefore, $\mathcal{E}=\mathcal{P}$.

\begin{figure}[!htp]
    \centering
    \begin{tikzpicture}[scale=1]
        \draw[fill=gray!30,line width=2pt] (0,0) ellipse [x radius=5, y radius=3]; 
        \draw[fill=gray!60,line width=2pt] (0,-0.4) ellipse [x radius=4, y radius=2.4];
        \draw[fill=gray!90,line width=2pt] (0,-0.8) ellipse [x radius=3, y radius=1.8]; 
        \node at (0, 2.45) {Nash Equilibrium ($\mathcal{E}$)};
        \node at (0, 1.35) {Uniform Equilibrium ($\mathcal{U}$)};
        \node at (0.0, 0.0) {Pure-Strategy UE ($\mathcal{P}$)};
        \node[font=\footnotesize] at (0.0,-1) {Contains Single Element};
        \node[font=\footnotesize] at (0.0,-1.3) {Characterized in \autoref{subsec:pure-strategy-characterization}};
        \draw[->,line width=1.5pt] (-3.2,0) to (-5,2) node[left]{$\mathcal{U}\setminus\mathcal{P}=\varnothing$};
        \node[above] at (-5.9,2.1) {(\autoref{pro:unique})};
        \draw[->,line width=1.5pt] (4.2,0) to (5,2) node[right]{$\mathcal{E}\setminus\mathcal{U}=\varnothing$};
        \node[above] at (5.9,2.1) {(\autoref{lem:Uniform Strategy})};
    \end{tikzpicture}

    \caption{Sets of Equilibria}
    \label{fig:equilibrium-sets}
\end{figure}

\section{Optimal Prize Design}\label{Sec: Prize Design}

In this section, we study the optimal prize design problem under the condition $r\leq 2/(n+1)$. Treating the prize spread $\mathcal{V}$ as a function of $v$, the total effort can be expressed as
\begin{equation}\label{Equa:total effort}
\mathbf{TE}(v)=n[\hat{x}_A+\hat{x}_B]=n\left(R\bar{p}_A\bar{p}_B\mathcal{V}(v)\right)^{1/\gamma}\left[(1/\gamma c_A)^{1/\gamma}+(1/\gamma c_B)^{1/\gamma}\right].
\end{equation}
where $\hat{x}_A$ and $\hat{x}_B$ are given by \autoref{pro:unique}. Hence, the contest organizer maximizes $\mathcal{V}(v)$ by choosing the prize allocation rule $v$. Alternatively, if the organizer aims to maximize the battle-winner's effort across all battles (i.e., total winners' effort), she will also seek to maximize $\mathcal{V}(v)$, because the winner's effort in each battle (i.e., $\bar{p}_A\hat{x}_A+\bar{p}_B\hat{x}_B$) is also proportional to $\mathcal{V}(v)$.\footnote{Another objective is to maximize the overall-winner's total effort (i.e., winner's total effort), where we interpret $v(\cdot)$ as the probability of being selected as the overall winner. The choice between the two characterizations of ``winner's effort" (i.e., total winners' effort or winner's total effort) depends on the specific context. The winner’s total effort is likely to be valued in politics-related settings, whereas the aggregate effort of all winners is more relevant in economic contexts.} 

\subsection{Characterization}

First, we introduce a set of feasible prize allocation rules that deviate only slightly from the majority rule.

\begin{definition}
In a \textbf{majority rule with a tie margin}, a player will be allocated the entire prize if he wins at least $T(>\frac{n}{2})$ battles; if neither player wins $T$ battles, the prize will be divided equally. 
\end{definition}

The tie margin was first introduced by \cite{nalebuff1983prizes} in the context of rank-order tournaments, with subsequent extensions by \cite{imhof2014tournaments,imhof2016ex}. Outside the framework of rank-order tournaments, \citet*{gelder2022all} theoretically examine the tie margin in all-pay auctions, while \cite*{kuang2026wp} experimentally investigate its application in team contests featuring multiple pairwise battles. We extend this idea to multidimensional contests between asymmetric players. 

When a tie margin is implemented, a participant can secure victory only by outperforming the opponent by at least the specified minimum margin. Otherwise, the contest results in a draw. Here, the size of the tie margin is $2T-(n+1)$. A tie margin of 2 (e.g., $n=5,T=4$) implies that a player must win at least 2 more battles than the opponent to claim the entire prize. To avoid ambiguity, we use the \emph{winning threshold} $T$ to define the majority rule with a tie margin in the remaining analysis. For instance, when $n=5,T=4$, tie margins of 3, 2.5, or 1.3 are all equivalent to a tie margin of 2.

Recall that $\theta(k|n-1,\bar{p}_A)=\binom{n-1}{k}\bar{p}_A^k\bar{p}_B^{n-k-1}$ denote the probability of $A$ winning $k$ out of $n-1$ matches. Let $g(k)=\theta(k|n-1,\bar{p}_A)+\theta(n-k-1|n-1,\bar{p}_A)$. Suppose $k^*\leq\frac{n-1}{2}$ maximizes $g(k)$.

\begin{lemma}\label{lem:singlepeak}
$g(k)$ increases in $k$ when $k<k^*$ and decreases in $k$ when $k^*<k\leq\frac{n-1}{2}$.
\end{lemma}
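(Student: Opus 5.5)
The plan is to strip out the common factors so that $g$ becomes an explicit sequence in $k$ depending only on the odds ratio, and then to show that its successive ratios cross $1$ at most once, from above. Write $m=n-1$, $p=\bar p_A\ge \tfrac12$ (because $c_A\le c_B$), and $q=p/(1-p)\ge 1$. Since $\binom{m}{m-k}=\binom{m}{k}$,
\[
g(k)=(1-p)^m\binom{m}{k}\bigl(q^k+q^{m-k}\bigr)\triangleq (1-p)^m a_k .
\]
So it suffices to show that $(a_k)_{0\le k\le \lfloor m/2\rfloor}$ is unimodal. Equivalently, define
\[
\rho_k\triangleq \frac{a_{k+1}}{a_k}=\frac{m-k}{k+1}\cdot h(m-2k),\qquad h(s)\triangleq\frac{q+q^{s-1}}{1+q^{s}}.
\]
We need $\rho_k\ge 1$ for $k<k^*$ and $\rho_k\le 1$ for $k\ge k^*$, over the relevant range $k+1\le \lfloor m/2\rfloor$, i.e. $s=m-2k\ge 2$.

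First I will record two elementary facts. The combinatorial factor $(m-k)/(k+1)$ is strictly decreasing in $k$. The factor $h(s)$ lies in $[1/q,1]$, since $q+q^{s-1}\le 1+q^s$ is equivalent to $(q-1)(q^{s-1}-1)\ge 0$. One can check that $h(s)$ is decreasing in $s$, hence increasing in $k$. The two factors therefore move in opposite directions, so $\rho_k$ need not be monotone, and a naive log-concavity argument fails: a sum of two log-concave sequences need not be log-concave. The argument must instead be a \emph{single-crossing} one: if $\rho_k\le 1$, then $\rho_{k+1}<1$. This directly gives the claimed shape, with $k^*$ the first index at which $\rho_k\le 1$, or $k^*=\lfloor m/2\rfloor$ if there is none.

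To prove the crossing property, clear denominators. The condition $\rho_k\le 1$ reads
\[
D(k)\triangleq (m-k)\bigl(q+q^{s-1}\bigr)-(k+1)\bigl(1+q^{s}\bigr)\le 0,\qquad s=m-2k .
\]
The goal is then to show that $D(k)\le 0$ implies $D(k+1)<0$, where $s$ drops to $s-2$. My first attempt is to split $D$ into a ``head'' part, $(m-k)q-(k+1)$, coming from the term $q^k$ (i.e. $\theta(k)$), and a ``tail'' part, $(m-k)q^{s-1}-(k+1)q^s$, coming from the term $q^{m-k}$ (i.e. $\theta(m-k)$). The head part is positive because $q\ge 1$ and $k<m/2$. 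So $D(k)\le 0$ forces the tail part to be sufficiently negative, which means $q>(m-k)/(k+1)$. In words, the dominant binomial term $\theta(m-k\mid m,p)$ is already past its mode and decreasing.

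Once $q$ is that large, I will compare the ratio of the tail parts at $k+1$ and at $k$, which is roughly $q^{-2}$ times a bounded combinatorial factor. I will also show that the head part grows by at most a bounded amount when $k\mapsto k+1$. Combining the two gives $D(k+1)<0$. I expect this quantitative comparison to be the main obstacle, since it must hold uniformly in $q\ge1$ and in $m$. If the direct split does not close, the fallback is to treat $s$ as a continuous variable and show that along the curve $D=0$ the derivative in the direction of increasing $k$ has the correct sign. That argument would use the bound $h\ge 1/q$ together with $(m-k)/(k+1)>1$ on the range $s\ge 2$.
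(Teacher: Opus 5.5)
Your setup is the paper's own: its $\phi(k)=g(k+1)/g(k)$ is your $\rho_k$, and it too proves a single-crossing property of this ratio. The gap is the one step that carries all the content, $D(k)\le 0\Rightarrow D(k+1)<0$. You do not prove it, and the sketch you give does not close it.

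Put $H=(m-k)q-(k+1)$ and $X=(k+1)q-(m-k)$. Then $D(k)=H-q^{s-1}X$ and $D(k+1)=\bigl(H-(q+1)\bigr)-q^{s-3}(X+q+1)$. Two of your claims need correcting:
\begin{itemize}
\item The head does not grow; it drops by exactly $q+1$.
\item The negative tail's magnitude is multiplied by $q^{-2}\bigl(1+(q+1)/X\bigr)$. The second factor is not uniformly bounded. In any case, a shrinking negative part works against $D(k+1)<0$.
\end{itemize}
If you use only $D(k)\le 0$ plus these two changes, treating $H$ and $X$ as free, the implication is false. For example, $q=2$, $s=4$, $X=10$, $H=80$ give $D(k)=0$ but $D(k+1)=77-26=51>0$.

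The missing idea is the exact coupling $H-X=(s-1)(q+1)$, i.e.\ $m-k=(k+1)+(s-1)$. With it, for $q>1$ (the case $q=1$ is trivial), the condition $\rho_k\ge 1$ becomes a threshold on $k$ alone:
\[
\rho_k\ge 1\iff k+1\le \Lambda(s)\triangleq\frac{(s-1)\bigl(q+q^{s-1}\bigr)}{(q-1)\bigl(q^{s-1}-1\bigr)}.
\]
Single crossing then reduces to the $k$-free inequality $\Lambda(s-2)<\Lambda(s)+1$.

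This is exactly the paper's route. Write $M=s-2$ for the paper's exponent $n-2k-3$. The paper's $\bar k+1$ and $\bar K$ are $\Lambda(s)$ and $\Lambda(s+2)$. It proves the stronger $\Lambda(s)\le\Lambda(s+2)$ by clearing denominators and expanding the geometric sums. It then bounds the symmetric polynomial $\mathbf P$ termwise from below by $2(M+1)p_A^{M+1}p_B^{M+1}$ and by $(M+1)(p_A^{M+2}p_B^{M}+p_A^{M}p_B^{M+2})$.

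Your continuous fallback can also be completed, but again only through this coupling. At a zero of $D$, the coupling forces $q^{s-1}X=\frac{(s-1)(q+1)q^{s-1}}{q^{s-1}-1}$. With $y=q^{s-1}>1$, the condition $\frac{dD}{dk}<0$ there reduces to $\ln y<\tfrac12(y-1/y)$, which holds for $y>1$. The facts you list ($h\ge 1/q$ and $(m-k)/(k+1)>1$) do not pin down $X$ at the zero, so they are not enough on their own.
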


\begin{proof}
See the Appendix.
\end{proof}
\medskip

Equipped with \autoref{lem:singlepeak}, we are ready to characterize the optimal prize allocation rule.

\begin{theorem}\label{thm:tiemargin}
The optimal prize allocation rule is a \textbf{majority rule with a tie margin}, in which $T=n-k^*$.
\end{theorem}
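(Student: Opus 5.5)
By \eqref{Equa:total effort}, total effort is strictly increasing in the prize spread. So the design problem reduces to maximizing
\[
\mathcal{V}(v)=\sum_{k=0}^{n-1}\theta(k\mid n-1,\bar p_A)\,\Delta v(k)
\]
over feasible rules. This is a linear program in the increments $\Delta v(k)\ge 0$. Our plan is to rewrite the objective so that budget balance is built in, and then read the optimum off from \autoref{lem:singlepeak}.

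\textbf{Step 1: symmetrize the increments.} Budget balance $v(k)+v(n-k)=1$ implies
\[
\Delta v(k)=v(k+1)-v(k)=\bigl(1-v(n-k-1)\bigr)-\bigl(1-v(n-k)\bigr)=\Delta v(n-1-k).
\]
So the increments are symmetric around $(n-1)/2$. Pairing $k$ with $n-1-k$ gives:
\begin{itemize}
\item for odd $n$: $\mathcal V=\sum_{k<\frac{n-1}{2}} g(k)\Delta v(k)+\theta\bigl(\tfrac{n-1}{2}\mid n-1,\bar p_A\bigr)\Delta v\bigl(\tfrac{n-1}{2}\bigr)$;
\item for even $n$: $\mathcal V=\sum_{k\le \frac{n}{2}-1} g(k)\Delta v(k)$.
\end{itemize}
In the odd case the middle coefficient equals $g\bigl(\tfrac{n-1}{2}\bigr)/2$, because the two terms defining $g$ coincide there.

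The constraints become $\Delta v(k)\ge 0$ together with the normalization $\sum_{k=0}^{n-1}\Delta v(k)=v(n)-v(0)=1$. In paired form this reads
\[
2\sum_{k<\frac{n-1}{2}}\Delta v(k)+\Delta v\bigl(\tfrac{n-1}{2}\bigr)=1
\]
for odd $n$, with no middle term for even $n$. Conversely, every nonnegative symmetric increment vector with total $1$ generates a feasible rule: setting $v(0)=0$ makes $v$ nondecreasing and nonnegative, and symmetry delivers budget balance.

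\textbf{Step 2: solve the linear program.} Let $w_k=2\Delta v(k)$ for $k<\frac{n-1}{2}$ and $w_{(n-1)/2}=\Delta v\bigl(\tfrac{n-1}{2}\bigr)$. Then $\mathcal V=\tfrac12\sum_{k\le\frac{n-1}{2}} g(k)w_k$, with $w\ge 0$ and $\sum_k w_k=1$. This form is uniform across the middle and non-middle terms, and for even $n$ the index simply runs up to $\frac n2-1$.

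The maximum of a linear function over a simplex is attained by putting all weight on an index maximizing $g(k)$ over $k\le\frac{n-1}{2}$, which by definition is $k^*$. \autoref{lem:singlepeak} guarantees that $g$ is single-peaked on this range, so $k^*$ is well defined. It also tells us that any optimum can place weight only on maximizers of $g$; if ties occur, we choose $k^*$.

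\textbf{Step 3: translate back into a rule.} Suppose $k^*<\frac{n-1}{2}$. Then $\Delta v(k^*)=\Delta v(n-1-k^*)=\tfrac12$ and all other increments are zero. Hence
\begin{itemize}
\item $v(k)=0$ for $k\le k^*$,
\item $v(k)=\tfrac12$ for $k^*<k\le n-1-k^*$,
\item $v(k)=1$ for $k\ge n-k^*$.
\end{itemize}
This is exactly the majority rule with a tie margin and threshold $T=n-k^*>\frac n2$.

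Suppose instead $n$ is odd and $k^*=\frac{n-1}{2}$. Then all weight sits on the middle increment, so $v(k)=\mathbf 1\bigl(k\ge\frac{n+1}{2}\bigr)$. This is the majority rule, which corresponds to $T=n-k^*=\frac{n+1}{2}$ with an empty split region.

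\textbf{Main obstacle.} The step needing the most care is the bookkeeping in Steps 1–2. We must check that the middle increment for odd $n$ enters with coefficient $g/2$ but with weight $1$ rather than $2$ in the budget, so that the reduced problem really is a uniform simplex maximization of $g$ and the middle index is neither over- nor under-weighted. The argument also relies on the condition $r\le 2/(n+1)$ (\autoref{pro:unique}), which ensures the equilibrium formula \eqref{Equa:total effort} is valid for every feasible $v$ and so justifies reducing the problem to maximizing $\mathcal V$.
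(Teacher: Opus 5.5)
Your proof is correct, and it takes a different route from the paper's.

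\textbf{The paper's route.} The paper works with the levels $v(k)$. It writes
$\mathcal V(v)=\theta(n-1\mid n-1,\bar p_A)+\sum_{k=1}^{n-1}\bigl[\theta(k-1\mid n-1,\bar p_A)-\theta(k\mid n-1,\bar p_A)\bigr]v(k)$
and relaxes monotonicity to the weaker constraint $v(k)\le v(n-k)$ for $k\le n-k$. It then solves each pair $(v(k),v(n-k))$ separately: the pair is split equally exactly when $g(k-1)>g(k)$, and set to $(0,1)$ otherwise. Finally it invokes \autoref{lem:singlepeak} to show that this pairwise solution has threshold form, so it satisfies the original monotonicity constraint and the relaxation is tight.

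\textbf{Your route.} You work with the increments instead. Monotonicity becomes $\Delta v\ge 0$, and budget balance becomes the symmetry $\Delta v(k)=\Delta v(n-1-k)$. After pairing, the feasible set is exactly a simplex in $w$ and the objective is $\tfrac12\sum_k g(k)w_k$, so the optimum is read off at a vertex. Your bookkeeping is right: the middle coefficient for odd $n$ is $g/2$ with budget weight $1$, and the converse direction (symmetric nonnegative increments summing to $1$ give a feasible rule) holds.

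\textbf{What each route buys.} Your route needs no relaxation and does not need \autoref{lem:singlepeak} at all.
\begin{itemize}
\item The existence of a maximizer $k^*$ over a finite set is trivial.
\item The simplex argument by itself shows that every optimal rule places weight only on maximizers of $g$.
\item As by-products you get the optimal value $g(k^*)/2$ and a description of all optimal rules when $g$ has ties.
\end{itemize}
So your sentence attributing the well-definedness of $k^*$ and the ``weight only on maximizers'' property to \autoref{lem:singlepeak} is a misattribution and can simply be dropped.

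The paper needs single-peakedness precisely because, without it, the pairwise solution of its relaxed program could be non-monotone. For example, $g(0)>g(1)<g(2)$ would give $v(1)=\tfrac12$ but $v(2)=0$. What the paper's route offers in exchange is a local, pair-by-pair reading of when splitting the prize at a given pair raises the spread. For the theorem itself, your argument is shorter and more self-contained.

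Your use of the normalization $v(0)=0$, $v(n)=1$ matches the paper's. It is harmless: a feasible rule with $v(0)=a>0$ equals $a+(1-2a)\tilde v$ for a normalized $\tilde v$, which scales $\mathcal V$ down by the factor $1-2a$.
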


When $n$ is odd, $k^*$ can be at most $\frac{n-1}{2}$, in which case the rule reduces to simple majority. When $n$ is even, $k^*$ can be at most $\frac{n}{2} - 1$. In this scenario, a player wins the entire prize if he outperforms his opponent, and the prize is split equally if both players win the same number of battles.

We explain the economic intuition behind why the tie margin rule improves effort incentives in asymmetric contests. The tie margin intensifies competition while preserving fairness by creating two layers of strategic incentives: avoiding defeat to secure a draw, and striving to turn a draw into a decisive win. This mechanism raises the threshold for the strong player to claim the full prize, preventing its dominance; meanwhile, it offers the weak player a realistic chance to obtain half the reward, which is far more attainable than an outright victory under conventional majority rules.

\medskip
We now prove \autoref{thm:tiemargin}. By \autoref{Equa:total effort}, to maximize $\mathbf{TE}$ is equivalent to maximize $\mathcal{V}(v)$, which can be rearranged as
\[
\mathcal{V}(v)=\theta(n-1|n-1,\bar{p}_A)+\sum_{k=1}^{n-1}\Big[\theta(k-1|n-1,\bar{p}_A)-\theta(k|n-1,\bar{p}_A)\Big]v(k).
\]

Consider a \emph{relaxed} optimization problem that uses the constraint $v(k)\leq v(n-k),\forall k\leq n-k$ instead of the monotonicity constraint. In $\mathcal{V}(v)$, the coefficient of $v(k)$ is $ \theta(k-1|n-1,\bar{p}_A)-\theta(k|n-1,\bar{p}_A)$ and the coefficient of $v(n-k)$ is $\theta(n-k-1|n-1,\bar{p}_A)-\theta(n-k|n-1,\bar{p}_A)$. Hence, in this relaxed program, the optimal solution is 
\begin{align*}
&v(k)=v(n-k)=0.5,\quad \text{if }\underbrace{\theta(k-1|n-1,\bar{p}_A)-\theta(k|n-1,\bar{p}_A)}_\text{Coefficient of $v(k)$}>\underbrace{\theta(n-k-1|n-1,\bar{p}_A)-\theta(n-k|n-1,\bar{p}_A)}_\text{Coefficient of $v(n-k)$}, \\
&v(k)=0,v(n-k)=1,\quad\text{otherwise}.
\end{align*}
By rearrangement, we find that the condition for $v(k)=v(n-k)=0.5$ is $g(k-1)>g(k)$. By \autoref{lem:singlepeak}, $g(k-1)\leq g(k)$ when $k\leq k^*$. Therefore, the optimal solution in the relaxed program is
\begin{equation*}
v(k)=\begin{cases}
1 & k\geq n-k^*, \\
0.5 & k^*<k<n-k^*, \\
0 & k\leq k^*.
\end{cases}
\end{equation*}
This solution meets the original monotonicity condition and is thus the optimal solution we desire. This finishes the proof of \autoref{thm:tiemargin}.

Note that the conventional majority rule $v_\text{MR}$ is a special case of the majority rule with a tie margin. It is optimal when the two players are symmetric, and $n$ is odd.

\begin{proposition}\label{corollary}
When $n$ is odd, and players are symmetric, the optimal prize allocation rule is the majority rule.
\end{proposition}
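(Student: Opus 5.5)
This follows directly from \autoref{thm:tiemargin}: we only need to compute $k^*$ in the symmetric case. When $n$ is odd, it suffices to show $k^*=\frac{n-1}{2}$, because then $T=n-k^*=\frac{n+1}{2}$, which is exactly the conventional majority rule $v_\text{MR}(k)=\mathbf{1}(k>\frac{n}{2})$.

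With symmetric players, $c_A=c_B$ gives $\bar{p}_A=\bar{p}_B=\frac12$. Hence $\theta(k|n-1,\tfrac12)=\binom{n-1}{k}2^{-(n-1)}$, and
\[
g(k)=\theta(k|n-1,\tfrac12)+\theta(n-1-k|n-1,\tfrac12)=2\binom{n-1}{k}2^{-(n-1)}.
\]
Since $n-1$ is even, the binomial coefficient $\binom{n-1}{k}$ is strictly increasing in $k$ on $0\le k\le\frac{n-1}{2}$. This follows from the ratio
\[
\binom{n-1}{k}\Big/\binom{n-1}{k-1}=\frac{n-k}{k}>1 \quad\text{for } k<\tfrac{n}{2}.
\]
So the maximizer of $g$ over $k\le\frac{n-1}{2}$ is unique and equals $k^*=\frac{n-1}{2}$. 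This is consistent with \autoref{lem:singlepeak}.

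To be safe, I would also check the relaxed-program step in the proof of \autoref{thm:tiemargin} directly. For every $k$ with $1\le k\le\frac{n-1}{2}$ we have $g(k-1)<g(k)$ strictly. So the relaxed optimum sets $v(k)=0$ and $v(n-k)=1$ for all such $k$, with no middle region receiving $0.5$; such a region is empty anyway, since $n$ odd means no $k$ equals $n/2$. The resulting rule is $v(k)=1$ for $k\ge\frac{n+1}{2}$ and $0$ otherwise, which is $v_\text{MR}$. It is monotone, so it is optimal in the original problem.

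No step is a real obstacle. The only points needing care are that the strict monotonicity of $g$ makes the optimum unique, and that oddness of $n$ makes the threshold $T=\frac{n+1}{2}$ coincide with simple majority.
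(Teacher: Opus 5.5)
Your proposal is correct and is essentially the paper's own argument. As in the paper, you set $\bar{p}_A=\bar{p}_B=\frac12$ and observe that $g(k)=2\binom{n-1}{k}2^{-(n-1)}$ is maximized at $k^*=\frac{n-1}{2}$, so that $T=n-k^*=\frac{n+1}{2}$ in \autoref{thm:tiemargin}, which gives $v_\text{MR}$. Your ratio computation $\binom{n-1}{k}/\binom{n-1}{k-1}=\frac{n-k}{k}>1$ and your recheck of the relaxed program (no middle region when $n$ is odd) just spell out what the paper cites as ``the property of binomial coefficients.''
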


When $n$ is odd, and players are symmetric, we have $\bar{p}_A=\bar{p}_B=\frac{1}{2}$ and thus $g(k)=2\binom{n-1}{k}\left(\frac{1}{2}\right)^{n-1}$. Hence, $k^*=\frac{n-1}{2}$ maximizes $g(k)$ according to the property of binomial coefficients, implying that the conventional majority rule is the optimal prize allocation rule.

We will compare the optimal rule we characterized with those studies on optimal prize design in other contest environments in \autoref{subsec:comparing-rule}.

\subsubsection*{Implications}

We illustrate how a tie margin rule can be implemented in multidimensional competition between two rival firms, where a tie is interpreted as a state of mutual coexistence. When competing for a government procurement contract, the government may evenly split the award if their performance differential is sufficiently small; in litigation disputes, a tie corresponds to the case where neither firm faces an injunction, enabling both to continue marketing their products. 

Our analyses also support the widely used majority rule. \autoref{corollary} establishes that the simple majority rule is optimal when two contestants are broadly evenly matched. This rule is widely adopted in elite sports competitions to determine a winner, and since competing athletes are typically of comparable strength, our analysis provides a theoretical foundation that rationalizes the prevalence of simple majority rules in such settings. In multi-battle political campaigns or rent-seeking contests, aggregate effort constitutes a social cost that ought to be minimized, which further suggests that the majority rule is more resource-efficient than arrangements that treat political gridlock as a tie.

\subsection{Comparative Statics}

We study how asymmetry between two players (in terms of $p_A$) influences the optimal rule.

\begin{proposition}\label{Lemma for p}
In the optimal rule (i.e., majority rule with a tie margin), $T$ (weakly) increases with $\bar{p}_A$.
\end{proposition}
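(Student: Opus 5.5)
The plan is to show that $k^*$ weakly decreases in $\bar p_A$; since $T=n-k^*$ by \autoref{thm:tiemargin}, this gives the claim. Throughout, $\bar p_A\ge \tfrac12$ because $c_A\le c_B$. Write $m=n-1$, $q=\bar p_B=1-\bar p_A$ and $\rho=\bar p_A/q\ge 1$. Note that $\rho$ is strictly increasing in $\bar p_A$. Then
\[
g(k)=\binom{m}{k}q^{m}\bigl(\rho^{k}+\rho^{m-k}\bigr).
\]
The common factor $q^m$ does not affect comparisons between $g(k)$ and $g(k-1)$, so only the ratio matters.

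\textbf{Step 1 (ratio form).} For $1\le k\le m/2$, factor out $\rho^{k-1}$ and set $s=m-2k+1\ge 1$. This gives
\[
\frac{g(k)}{g(k-1)}=\frac{\binom{m}{k}}{\binom{m}{k-1}}\cdot h_s(\rho),\qquad h_s(\rho)\triangleq\frac{\rho+\rho^{s}}{1+\rho^{s+1}}.
\]
The binomial factor does not depend on $\bar p_A$. All the dependence on asymmetry therefore sits in $h_s(\rho)$.

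\textbf{Step 2 (monotonicity of $h_s$).} The numerator of $h_s'(\rho)$ is
\[
(1+s\rho^{s-1})(1+\rho^{s+1})-(s+1)\rho^{s}(\rho+\rho^{s}) = (1-\rho^{2s})+s\rho^{s-1}(1-\rho^{2}).
\]
Both terms are nonpositive for $\rho\ge1$. Hence $h_s$ is nonincreasing on $[1,\infty)$, and so is the ratio $g(k)/g(k-1)$ for each fixed $k\le m/2$.

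\textbf{Step 3 (translate to $k^*$).} By \autoref{lem:singlepeak}, $g$ is single-peaked on $\{0,\dots,\lfloor m/2\rfloor\}$. So $k^*$ is the largest $k$ in this range with $g(k)\ge g(k-1)$, or $0$ if no such $k$ exists. Equivalently, $\{k\ge1: g(k)/g(k-1)\ge 1\}=\{1,\dots,k^*\}$ is an initial segment.

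Now raise $\bar p_A$, so that $\rho$ increases. By Step 2, each ratio weakly falls. Any $k$ with ratio below $1$ before the change still has ratio below $1$ afterwards. The initial segment can therefore only shrink, and $k^*$ weakly decreases, so $T=n-k^*$ weakly increases.

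The main obstacle is the treatment of ties in the argmax defining $k^*$. I would fix the convention $k^*=\max\arg\max g$, which is consistent with the relaxed-program solution in the proof of \autoref{thm:tiemargin}, where $v(k)=0.5$ only if $g(k-1)>g(k)$. With this convention the "largest $k$ with ratio $\ge1$" characterization in Step 3 holds exactly. The algebraic sign computation in Step 2 is then routine.
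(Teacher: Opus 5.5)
Your proposal is correct and follows essentially the paper's route. Your $h_s(\rho)$ is the reciprocal of the paper's $\tau(k,p)$ after the reflection $k\mapsto n-1-k$ and the substitution $\rho=1/\eta$, so your Step 2 is the paper's Step (2) with a cleaner factorization of the derivative's numerator. The only real difference is the last step: you turn the monotone consecutive ratios into monotonicity of $k^*$ using the initial-segment structure from \autoref{lem:singlepeak}, under a max-argmax convention consistent with the proof of \autoref{thm:tiemargin}. The paper instead uses the telescoping product and Milgrom--Shannon single crossing; your version handles ties more explicitly.
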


We can view the tie margin as an unbiased way to favor the weak player, as the weak player is easier to obtain a positive prize, and the strong player is harder to obtain the full prize. Then, \autoref{Lemma for p} features the level-the-playing-field principle in contest literature.

Note that $\bar{p}_A=\frac{c_B^{R/\gamma}}{c_A^{R/\gamma}+c_B^{R/\gamma}}$ depends on the marginal effort costs of the two players, the discriminatory power of Tullock contests, and the curvature of the cost function. Note that $\bar{p}_A$ decreases with $c_A/c_B$, (weakly) increases with $R$, and (weakly) decreases with $\gamma$. When the two players become more asymmetric (i.e., a lower $c_A/c_B$) or the contest becomes discriminatory (i.e., a higher $R$) or the cost function becomes less convex (i.e., a lower $\gamma$), \autoref{Lemma for p} implies that a higher $T$ should be set to induce more effort.

\medskip
\begin{proof}
Let $T(\bar{p}_A)$ denote the winning requirement in the optimal rule given $\bar{p}_A$. Recall that 
\[
g(k,\bar{p}_A)=\theta(k\mid n-1,\bar{p}_A)+\theta(n-1-k\mid n-1,\bar{p}_A)
=\binom{n-1}{k}\Big[\bar{p}_A^{k}\bar{p}_B^{\,n-1-k}+\bar{p}_A^{\,n-1-k}\bar{p}_B^{k}\Big].
\]
By definition, we have $k^*(p)=\arg\max_{k\le \frac{n-1}{2}} g(k,p)$ and $T(p)=\arg\max_{k\ge \frac{n-1}{2}} g(k,p)+1$. To establish that $T(\bar{p}_A)$ increases with $\bar{p}_A$, it suffices to show that
$\{g(\cdot,p)\}_{p\in[1/2,1)}$ obeys the Milgrom--Shannon single crossing condition \citep{Milgrom1994Monotone}. Details are in the Appendix.
\end{proof}
\medskip

We use the following example to illustrate how the winning threshold varies with $c_A/c_B$ and $r$.

\begin{example}
Consider a 7-dimensional contest. The sufficient condition in \autoref{Theo:existence} requires $r\leq0.25$. The optimal winning threshold $T$ may take the value 4,5, or 6, depending on $r$ and $c_A/c_B$. The relationship between $T$ and these two parameters is illustrated in \autoref{fig:optimalrule}.
\end{example}

\begin{figure}[!htb]
\centering    \includegraphics[width=0.6\linewidth]{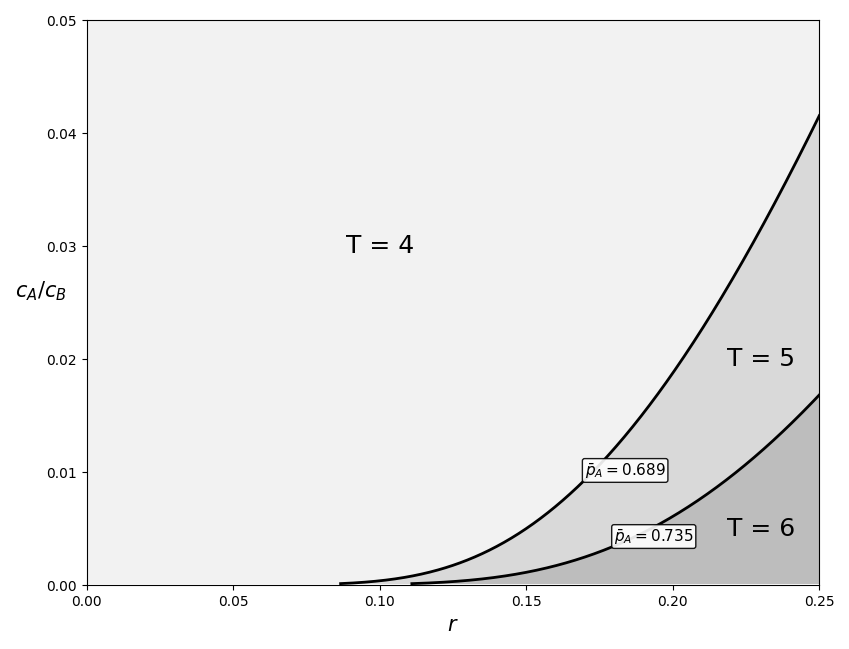}
\caption{Optimal Winning Threshold $T$ under Different Parameters ($n=7$)}
\label{fig:optimalrule}
\end{figure}

In \autoref{fig:optimalrule}, the domain is partitioned into three distinct regions by two thick boundary lines, namely $\bar{p}_A=\frac{1}{1+(c_A/c_B)^r}=0.689$ and $\bar{p}_A=0.735$. When the parameters lie within the upper-left region, the optimal winning threshold is $T=4$; for parameters in the middle region, the optimal threshold becomes $T=5$; and when parameters fall into the lower-right region, the optimal winning threshold is $T=6$. \autoref{fig:optimalrule} corroborates \autoref{Lemma for p}: as $r$ increases or asymmetry rises (the lower-right corner of the figure), the optimal threshold $T$ in the prize rule increases.

\section{Discussions}\label{sec:dicussion}

\subsection{The Case with Cross-battle Cost Externalities}\label{subsec:alternative}

In the baseline model, costs are separable across battles: effort in one battle does not affect the cost of effort in another. We now consider an alternative specification in which player $i$'s cost is a convex function of \emph{aggregate} effort. Under this formulation, increasing effort in one battle raises the marginal cost of effort in every other battle, creating cross-battle externalities.  Such linkages arise naturally when contestants draw on a common resource, e.g., a firm allocating engineers across parallel R\&D projects, or a political campaign distributing staff across battleground states. We show that our main results extend to this setting.

We first introduce the notations. As before, let $\hat{x}_{i(j)} \geq 0$ denote player $i$'s effort in battle $j$, and let $\hat{X}_i\triangleq\sum_{j\in\mathcal{N}}\hat{x}_{i(j)}$ denote total effort. Rather than incurring costs battle by battle, player $i$ faces an aggregate cost $C_i\hat{X}_i^\Gamma$ with $\Gamma \geq 1$. Under this formulation, battles are linked not only through the prize allocation rule, but also through the cost structure. Similar to before, we assume that player $A$ is (weakly) stronger, namely $C_A\leq C_B$.

\subsubsection{Pure-Strategy Equilibrium}

To begin with, we argue that \autoref{lem:Uniform Strategy} still applies.
The argument relies on the observation that, against a uniform strategy of the opponent, smoothing one's effort across battles weakly increases the expected prize when $0< R \le 1$. Under the new cost specification, such smoothing leaves total effort $\hat X_i$ unchanged, and therefore leaves total cost $C_i \hat X_i^\Gamma$ unchanged. From the proof of \autoref{lem:Uniform Strategy}, we know that such smoothing improves the expected payoff in the contest, so it is optimal to make such improvements step by step and converge to a uniform strategy. Hence, part (i) of \autoref{lem:Uniform Strategy} continues to hold: every equilibrium of the restricted game is also an equilibrium of the original game. Part (ii) follows the same argument as before since the interchangeability argument does not rely on a certain cost structure. Therefore, we restrict attention to uniform strategies. 

Suppose player $i$ adopts a uniform strategy. Let $\hat x_i$ denote player $i$'s effort in each battle, so that $\hat x_{i(j)}=\hat x_i$ for all $j$. Then, the total effort equals $\hat X_i=n\hat x_i$, and the cost becomes $C_i(n\hat x_i)^{\Gamma}$.
Let
\[
x_i=\hat x_i^{\Gamma},
\qquad
r=\frac{R}{\Gamma},
\qquad
c_i=n^{\Gamma-1}C_i.
\]
Then, it is strategically equivalent to consider the linear-cost environment with parameters $(r,c_A,c_B)$ because $C_i(n\hat x_i)^{\Gamma}=nc_ix_i$ and player $i$'s wins each battle with probability
\[
\frac{\hat x_i^{R}}{\hat x_A^{R}+\hat x_B^{R}}
=
\frac{x_i^{R/\Gamma}}{x_A^{R/\Gamma}+x_B^{R/\Gamma}}
=
\frac{x_i^{r}}{x_A^{r}+x_B^{r}}.
\]

\begin{remark}\label{rem:alternative-equilibrium}
When $r=R/\Gamma\leq 2/(n+1)$, the pure-strategy uniform equilibrium characterized in \autoref{claim:equilibrium} is the \textbf{unique} equilibrium in the linear-cost game. This implies that in the original game with primitives $R,\Gamma,C_A,C_B$, we have a unique equilibrium, in which
\[
\hat{x}_{A(j)}=\hat{x}_A\triangleq
\left(\frac{R\bar{p}_A\bar{p}_B\mathcal{V}}{\Gamma n^{\Gamma-1}C_A}\right)^{1/\Gamma},
\qquad
\hat{x}_{B(j)}=\hat{x}_B\triangleq
\left(\frac{R\bar{p}_A\bar{p}_B\mathcal{V}}{\Gamma n^{\Gamma-1}C_B}\right)^{1/\Gamma},
\qquad \forall j\in\mathcal{N},
\]
where $\bar{p}_A=\frac{C_B^{R/\Gamma}}{C_A^{R/\Gamma}+C_B^{R/\Gamma}}$ and $\bar{p}_B=\frac{C_A^{R/\Gamma}}{C_A^{R/\Gamma}+C_B^{R/\Gamma}}$.
\end{remark}

\subsubsection{Optimal Prize Design}




We now move to the contest design problem. When $r=R/\Gamma\leq 2/(n+1)$, the unique equilibrium is uniform, with per-battle efforts $\hat x_A$ and $\hat x_B$ defined in \autoref{rem:alternative-equilibrium}. The expected total effort is
\[
\mathbf{TE}(v)
=
n^{1/\Gamma}
\left(R\bar p_A\bar p_B/\Gamma\right)^{1/\Gamma}
\left(C_A^{-1/\Gamma}+C_B^{-1/\Gamma}\right)
\big(\mathcal V(v)\big)^{1/\Gamma}.
\]
Therefore, maximizing $\mathbf{TE}(v)$ is still equivalent to maximizing $\mathcal V(v)$. 

\begin{remark}
The optimal prize allocation rule is still a \textbf{majority rule with a tie margin}, in which $T=n-k^*$, where $k^*\leq\frac{n-1}{2}$ maximizes $g(k)=\theta(k|n-1,\bar{p}_A)+\theta(n-k-1|n-1,\bar{p}_A)$, and $\bar p_A$ are now given by $\bar p_A=\frac{C_B^{R/\Gamma}}{C_A^{R/\Gamma}+C_B^{R/\Gamma}}$.

Given $\Gamma=\gamma$, the above finding together with \autoref{thm:tiemargin} immediately means that the existence of cross-battle cost externalities does not change the optimal prize allocation rule. 
\end{remark}

\subsection{Identity-dependent Prize Allocation Rules}\label{subsec:ID}

The identity-independence constraint in our baseline model reflects fairness norms prevalent in political elections, litigation, and sports. However, in settings such as R\&D procurement or internal labor markets, the organizer may have both the ability and the incentive to treat contestants differently based on their identities. In this section, we study how does the optimal rule change when the organizer can discriminate between players to learn the cost of fairness.

To accommodate identity-dependent rules, we represent the prize allocation by a pair $(v^A,v^B)$, where $v^i(k)$ denotes player $i$'s prize upon winning $k$ battles, with $v^A(k)$ and $v^B(k)$ not necessarily equal. \autoref{ass:3} is updated as follows.

\begin{assumption}\label{ass:ID}
    (i) Non-negativity. $v^i(k)\geq 0,\forall k=0,1,\cdots,n$ and $i=A,B$.
    
    (ii) Monotonicity. $v^i(k+1)\geq v^i(k),\forall k=0,1,\cdots,n-1$ and $i=A,B$.
    
    (iii) Budget Balance. $v^A(k)+v^B(n-k)=1,\forall k=0,1,\cdots,n$.
\end{assumption}

An identity-dependent prize allocation rule is \textbf{feasible} if it satisfies \autoref{ass:ID}. By budget balance, we can uniquely pin down $v^B$ once we know $v^A$. Without loss of generality, we normalize $v^i(0)=0$ and $v^i(n)=1$.

\subsubsection{Pure-Strategy Equilibrium}

First, the transformation to the linear-cost game and \autoref{lem:Uniform Strategy} still applies. Then, we restrict attention to uniform strategies. If a pure-strategy equilibrium exists, \autoref{lem:indep} and \autoref{claim:equilibrium} also hold verbatim, and the definition of $\bar{p}_A$ and $\bar{p}_B$ would not change as well, except that the prize spreads for both players are now 
\begin{align*}
\mathcal{V}^A_\text{ID}\triangleq\sum_{k=0}^{n-1} \binom{n-1}{k}\bar{p}_A^{k}\bar{p}_B^{n-k-1}[v^A(k+1)-v^A(k)],\\
\mathcal{V}^B_\text{ID}\triangleq\sum_{k=0}^{n-1} \binom{n-1}{k}\bar{p}_B^{k}\bar{p}_A^{n-k-1}[v^B(k+1)-v^B(k)],
\end{align*}
and $\mathcal{V}^A_\text{ID}=\mathcal{V}^B_\text{ID}$ by budget balance conditions.

A more important difference is that \autoref{lem:V'V} would no longer hold. Instead, we can show, using a similar technique, that $\frac{(\mathcal{V}^i_\text{ID})'(p)}{\mathcal{V}^i_\text{ID}(p)}\le\frac{n-1}{p(1-p)}$ for all $p\in(0,1)$ and $i=A,B$. Then, for all $p\in(0,1)$ and $i=A,B$,
\begin{equation}\label{eq:bracket_bound_sharp}
(1-2p)+p(1-p)\frac{(\mathcal{V}^i_\text{ID})'(p)}{\mathcal{V}^i_\text{ID}(p)}
\le (1-2p)+n-1<n.
\end{equation}
Hence, a sufficient condition to guarantee the existence of a pure-strategy equilibrium is $r\leq 1/n$. 

\begin{remark}\label{rem:ID-equilibrium}
When $r=R/\gamma\leq 1/n$, a pure-strategy equilibrium exists under \textbf{all} feasible rules and it is the unique equilibrium. This means that in the original game with parameters $R$ and $\gamma$, we have a unique equilibrium, in which 
\[
\hat{x}_{A(j)}=\hat{x}_A\triangleq\left(\frac{R\bar{p}_A\bar{p}_B\mathcal{V}_\text{ID}}{\gamma c_A}\right)^{1/\gamma},\qquad \hat{x}_{B(j)}=\hat{x}_B\triangleq\left(\frac{R\bar{p}_A\bar{p}_B\mathcal{V}_\text{ID}}{\gamma c_B}\right)^{1/\gamma},\ \ \  \forall j\in\mathcal{N}, 
\]
where $\mathcal{V}_\text{ID}=\mathcal{V}^A_\text{ID}=\mathcal{V}^B_\text{ID}$, $\bar{p}_A=\frac{c_B^{R/\gamma}}{c_A^{R/\gamma}+c_B^{R/\gamma}}$ and $ \bar{p}_B=\frac{c_A^{R/\gamma}}{c_A^{R/\gamma}+c_B^{R/\gamma}}$.
\end{remark}

\subsubsection{Optimal Prize Design}

When $r=R/\Gamma\leq 1/n$, the unique equilibrium is uniform, with per-battle efforts $\hat x_A$ and $\hat x_B$ defined in \autoref{rem:ID-equilibrium}. The expected total effort is
\[
\mathbf{TE}_\text{ID}(v)=n[\hat{x}_A+\hat{x}_B]=n\left(R\bar{p}_A\bar{p}_B\mathcal{V}_\text{ID}(v)\right)^{1/\gamma}\left[(1/\gamma c_A)^{1/\gamma}+(1/\gamma c_B)^{1/\gamma}\right].
\]
Therefore, maximizing $\mathbf{TE}_\text{ID}(v)$ is equivalent to maximizing $\mathcal{V}_\text{ID}(v)$.

We next introduce a class of prize allocation rules, which is first introduced by \cite*{Feng2024OptimalTeam} in a team-contest environment.

\begin{definition}[Majority rule with a headstart]
In a majority rule with a headstart, player $A$ will be allocated the entire prize if he wins at least $\hat{T}$($>n/2$) battles; otherwise, the entire prize is allocated to player $B$. Equivalently, the weaker player is given a headstart in the form of $2\hat{T}-n-1$ initial wins, and the entire prize is awarded to the player with the higher number of wins.
\end{definition}

We now characterize the optimal rule.

\begin{proposition}\label{pro:headstart}
The optimal \textbf{identity-dependent} prize allocation rule is a \textbf{majority rule with a headstart}, in which the headstart (in terms of initial wins) is $2\hat{T}-n-1$, where $\hat{T}=\lfloor\bar{p}_An\rfloor+1$.
\end{proposition}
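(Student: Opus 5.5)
By \autoref{rem:ID-equilibrium}, maximizing $\mathbf{TE}_\text{ID}$ reduces to maximizing $\mathcal{V}_\text{ID}=\mathcal{V}^A_\text{ID}$ over feasible $v^A$. Budget balance pins down $v^B$ from $v^A$. Monotonicity of $v^B$ is equivalent to monotonicity of $v^A$, since $v^B(k)=1-v^A(n-k)$. So the only constraints are $0=v^A(0)\le v^A(1)\le\cdots\le v^A(n)=1$. Unlike the identity-independent case, there is no symmetry constraint linking $v^A(k)$ and $v^A(n-k)$. The problem becomes
\[
\max_{\Delta v^A\ge 0,\ \sum_{t=0}^{n-1}\Delta v^A(t)=1}\ \sum_{t=0}^{n-1}\Delta v^A(t)\,\theta(t\mid n-1,\bar p_A),
\]
a linear program over the simplex of increments.

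The optimum puts all mass on an increment $t^*$ that maximizes the binomial probability $\theta(t\mid n-1,\bar p_A)$, i.e., at a mode of $\mathrm{Bin}(n-1,\bar p_A)$. The corresponding rule is $v^A(k)=\mathbf{1}(k\ge t^*+1)$. Player $A$ takes the whole prize iff he wins at least $\hat T=t^*+1$ battles; otherwise $B$ gets everything. This is exactly a majority rule with a headstart. The equivalence with $2\hat T-n-1$ initial wins for $B$ is a bookkeeping step. $A$ wins iff $k\ge \hat T$, which is iff $k>(n-k)+(2\hat T-n-1)$; one must handle the tie convention at equality, with the tie going to $A$ when $k=\hat T$.

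The main step is identifying the mode. I will use the standard ratio argument:
\[
\frac{\theta(t+1\mid n-1,p)}{\theta(t\mid n-1,p)}=\frac{(n-1-t)\,p}{(t+1)(1-p)}\ge 1 \iff t+1\le np .
\]
So $\theta(\cdot\mid n-1,p)$ is unimodal, increasing up to $t=\lfloor np\rfloor$ and decreasing afterward. Hence $t^*=\lfloor n\bar p_A\rfloor$ and $\hat T=\lfloor n\bar p_A\rfloor+1$, as stated. When $n\bar p_A$ is an integer, $t^*$ and $t^*-1$ tie and both rules are optimal, so the statement's choice is one optimizer.

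Finally I will check that $\hat T>n/2$, so that the definition applies and the headstart is nonnegative. Since $\bar p_A\ge 1/2$ because $c_A\le c_B$, we get $\lfloor n\bar p_A\rfloor\ge\lfloor n/2\rfloor$, hence $\hat T\ge\lfloor n/2\rfloor+1>n/2$. I expect the only delicate points to be the tie and integer-boundary conventions; the optimization itself is immediate once the problem is seen as a linear program over the simplex.
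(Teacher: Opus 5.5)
Your proposal is correct and follows the paper's own argument: the paper also maximizes $\mathcal{V}_\text{ID}$ by putting all weight on the mode $\hat k=\lfloor \bar p_A n\rfloor$ of $\mathrm{Bin}(n-1,\bar p_A)$ and setting $v^A(k)=\mathbf{1}(k\ge \hat k+1)$; your linear-program framing, the ratio computation for the mode, and the check $\hat T>n/2$ just supply details the paper states without proof. One small correction is that the headstart translation needs no tie convention, because $k=(n-k)+(2\hat T-n-1)$ would force $2k=2\hat T-1$, which no integer $k$ satisfies (at $k=\hat T$, player $A$ leads $\hat T$ to $\hat T-1$).
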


\begin{proof}
Let $\hat{k}=\arg\max_k\binom{n-1}{k}\bar{p}_A^k(1-\bar{p}_A)^{n-k-1}=\lfloor\bar{p}_An\rfloor$. Then, $\mathcal{V}_\text{ID}(v)$ is maximized by $v^A(k)=\mathbf{1}(k\geq\hat{k}+1)$ and the corresponding prize spread is $\binom{n-1}{\hat{k}}\bar{p}_A^{\hat{k}}(1-\bar{p}_A)^{n-\hat{k}-1}$.
\end{proof}

Comparing the identity-dependent optimal rule 
(\autoref{pro:headstart}) with the identity-independent optimal rule (\autoref{thm:tiemargin}) reveals the cost of imposing fairness. Under identity-dependent rules, the organizer can calibrate the prize structure to each player's strength, achieving a more balanced contest. Under identity-independent rules, the organizer has to use the tie margin as an indirect balancing instrument. The efficiency loss from fairness depends on the degree of asymmetry: when players are symmetric, the cost is zero; as asymmetry grows, the cost increases because the tie margin becomes a progressively blunter instrument relative to player-specific favoritism.

\subsection{Relation to Equilibrium Analysis in \cite{Klumpp2006Primaries}}\label{subsec:KP06}

\cite{Klumpp2006Primaries} study multidimensional contests with simultaneous and sequential battles. In particular, they characterize the equilibrium of a simultaneous multidimensional contest. Their analysis is based on the following assumptions for tractability: (i) symmetric players with identical and constant marginal effort costs; (ii) an odd number of battles and a simple majority rule for prize allocation.

Our paper focuses on simultaneous battles. We however relax their setting to build a more general and empirically relevant model. First, our model permits arbitrary asymmetry across participants and allows power-form convex costs; we also show that our results are robust while allowing cross-battle cost externalities.  Second, we incorporate a generalized prize allocation mechanism allowing for both odd and even numbers of battles. 

These generalizations entail new challenges in equilibrium analysis. In the following, we demonstrate and discuss how the arbitrary prize rule and contestants' asymmetry affect our analysis and change the results relative to \cite{Klumpp2006Primaries}.

\subsubsection{Existence of Pure-Strategy Equilibrium}

With symmetric contestants, \cite{Klumpp2006Primaries} provide a upper bound for $r$ to ensure a pure-strategy equilibrium: $2^n\Big/\left[n\binom{n-1}{(n-1)/2}\right]$. We can compare their bound with our bound $2/(n+1)$ and the results are summarized in \autoref{tab:rcomparison}: The ``Symmetric" refers to the homogeneous contestants setting in \cite{Klumpp2006Primaries} that applies exclusively to majority rule; while ``Arbitrarily Asymmetric" refers to the asymmetric contestants setting in this paper, which holds for arbitrary prize rules and facilitates the prize design analysis in \autoref{Sec: Prize Design}. We observe that, to guarantee a pure-strategy equilibrium for an arbitrary degree of contestants' asymmetry, the bound on $r$ becomes considerably smaller, and the ratio of the bounds increases as $n$ rises.

\begin{table}[!htb]
    \centering
    \begin{tabular}{c|cccccccc}
        $n$ & 3 & 5 & 7 & 9 & 11 & 21 & 31 & 51 \\
        \hline
        Symmetric & $1$ & $1$ & $0.9143$ & $0.8127$ & $0.7388$ & $0.5405$ & $0.4466$ & $0.3493$ \\
        \hline
        Arbitrarily Asymmetric & $0.5$ & $0.3333$ & $0.2500$ & $0.2000$ & $0.1667$ & $0.0909$ & $0.0625$ & $0.0385$ \\
        \hline
        Ratio & $2$ & $3$ & $3.67$ & $4.06$ & $4.43$ & $5.95$ & $7.15$ & $9.08$ 
    \end{tabular}
    \caption{Maximum $r$ to Guarantee Pure-strategy Equilibrium}
    \label{tab:rcomparison}
\end{table}




\subsubsection{Failure of Pure-Strategy Equilibrium}

We argue that generalizing \emph{both} prize allocation rules and contestants’ asymmetry invalidates the intuition in \cite{Klumpp2006Primaries} for the breakdown of pure-strategy equilibrium. 
With two symmetric contestants and the majority rule, \cite{Klumpp2006Primaries} explain why pure-strategy equilibrium fails when $n$ is sufficiently large. Suppose $n$ is large and both players spend equally in a putative pure-strategy equilibrium and obtain positive rents. If one player slightly raises total effort and distributes it evenly across all dimensions, he wins each dimension with a marginally higher probability than a half. By the law of large numbers, he then captures a majority of dimensions with probability close to one. This reasoning relies on the majority rule and contestants' symmetry. 

Technically, \cite{Klumpp2006Primaries} state that, given an odd $n$ and the majority rule $v_\text{MR}$, a pure-strategy equilibrium exists \emph{if and only if} $\pi_B^*(v|n,r)\geq0$. This clean cutoff structure, however, does not extend to general prize rules, underscoring the complexity of pure-strategy equilibrium existence beyond the majority rule.

\begin{remark}
Suppose two contestants are symmetric. Given an odd $n$ and a feasible rule $v(\cdot)$, a pure-strategy equilibrium \textbf{may not exist} even if $\pi_B^*(v|n,r)\geq0$. See \autoref{example:symmetric} for a counterexample.
\end{remark}

\begin{example}\label{example:symmetric}
Consider symmetric players with $n=7$, $r=1$, and a marginal cost of $1$. The entire prize is allocated to one player if he wins all battles; if neither player does, the prize would be split equally. We can compute $\mathcal{V}=1/64$, and the corresponding strategy would be $x^*=1/256$ for all battles, resulting in a payoff of $121/256>0$. However, the strategy profile $(x^*,x^*)$ is \textbf{not} an equilibrium.
\end{example}

For \autoref{example:symmetric}, \autoref{fig:symmetric_counterexample} plots each player's payoff as a function of his own effort, holding the opponent's effort fixed at $x^*$. The vertical dashed line marks $x^*$, and the horizontal dotted line marks the payoff level under $(x^*,x^*)$. Clearly, $x^*$ is only a local maximizer but not the global maximizer.

\begin{figure}[!htb]
\centering
\begin{subfigure}{0.48\textwidth}
    \centering
    \includegraphics[width=\linewidth]{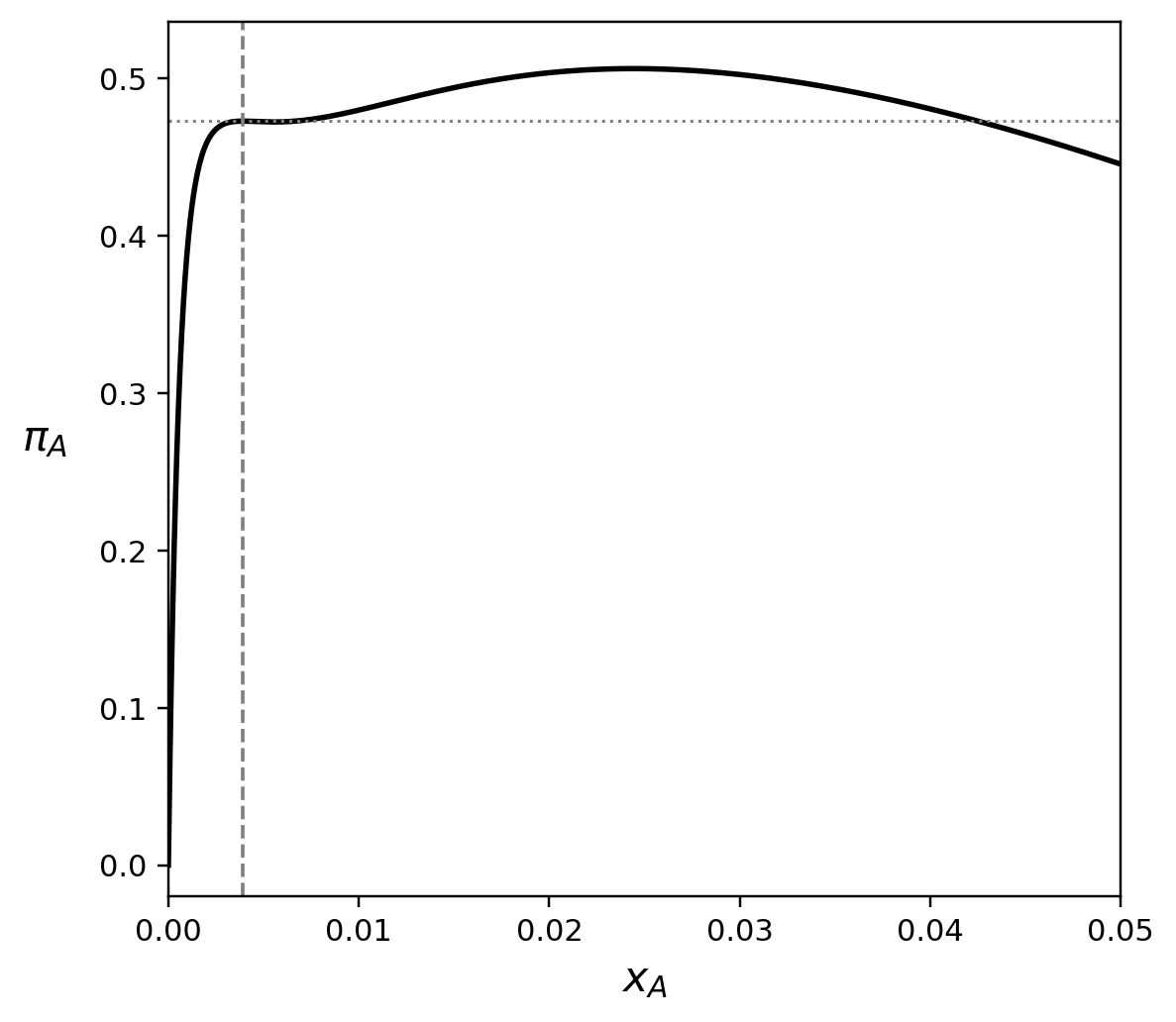}
    \caption{Player $A$'s payoff holding $x_B=x^*$.}
\end{subfigure}
\hfill
\begin{subfigure}{0.48\textwidth}
    \centering
    \includegraphics[width=\linewidth]{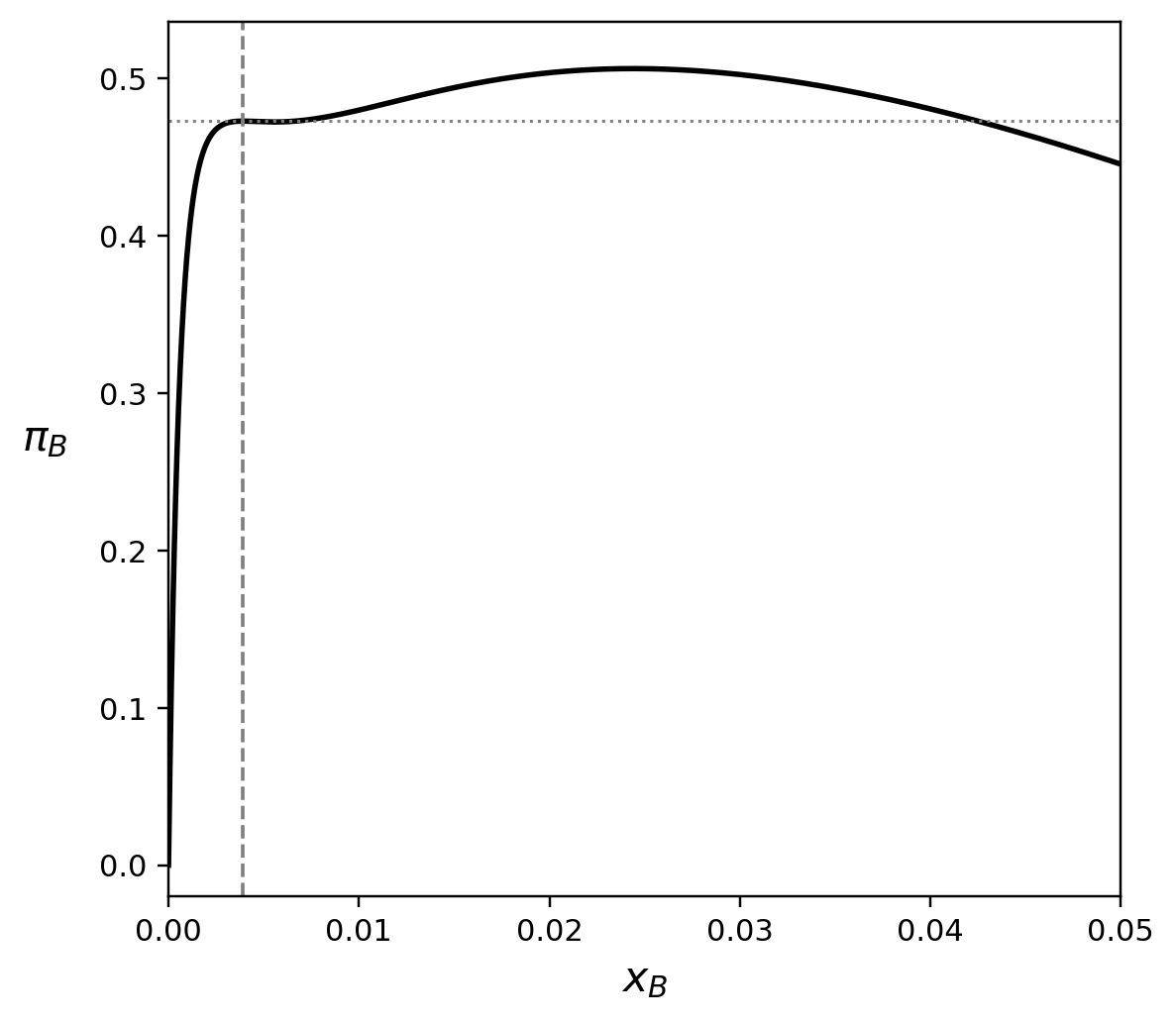}
    \caption{Player $B$'s payoff holding $x_A=x^*$.}
\end{subfigure}
\caption{Solutions to First-order Conditions and Players' Payoffs for \autoref{example:symmetric}}
\label{fig:symmetric_counterexample}
\end{figure}

Nevertheless, \autoref{prop:necessary} yields additional insights regarding the breakdown of pure-strategy equilibria: the failure stems not merely from statistical effects (\citealp{Klumpp2006Primaries}) but also from economic incentive effects driven by both the prize allocation rule and player asymmetry. First, we confirm that majority rule constitutes the most restrictive prize allocation rule for sustaining a pure-strategy equilibrium, whereas nearly all existing literature exogenously imposes the majority rule. Second, we demonstrate that asymmetries among contestants generally hinder the existence of pure-strategy equilibria, which is not well recognized in studies of multi-battle contests.

\subsection{Relation to Existing Work on Prize Design in Multidimensional Contests}\label{subsec:comparing-rule}

We compare \autoref{thm:tiemargin} with those studies on optimal prize design in other environments of multidimensional contests.

\subsubsection{Relation to Prize Design in Multidimensional Contests with Sequential Battles}

To our knowledge, \cite{Feng2018Split} is the first study on the effort-maximizing prize design in multi-battle contests. Their environment is a sequential 3-battle contest between two symmetric players, while our environment is a simultaneous $n$-battle contest between two asymmetric players. \autoref{tab:FengLu} summarizes the main differences in the contest environment. 

\begin{table}[!htb]
    \centering
    \begin{tabular}{ccc}
         & \cite{Feng2018Split} & This paper \\
        \hline
        Contest Dynamics & Sequential & Simultaneous \\
        Number of Battles & 3 & $n$ \\
        Contestants' Asymmetry & No & Yes \\
    \end{tabular}
    \caption{Comparisons with \cite{Feng2018Split}}
    \label{tab:FengLu}
\end{table}

The main driving forces of these two studies are also different. \cite{Feng2018Split} focus on how the optimal prize design varies with the discriminatory power $r$ of the contest success function. In the sequential 3-battle contest analyzed by \cite{Feng2018Split}, the only decision variable is the prize for winning 2 out of 3 battles, denoted by $v(2)$. They show that winner-take-all ($v(2)=1$) is optimal when $r$ is low range. For the intermediate values, the optimal prize structure shifts continuously from the winner-take-all rule toward the proportional division ($v(2)=\frac{2}{3}$) as discriminatory power rises. When discriminatory power is sufficiently high, a broad range of prize structures extracts full surplus and is thus optimal.\footnote{\cite{Feng2018Split} also link the optimal prize rule with the \emph{momentum effect}. A momentum effect arises when $v(2)=1$: An early lead makes subsequent wins easier for the leader than the laggard in dynamic contests. By contrast, when $v(2)=\frac{2}{3}$, the momentum effect is mitigated.} 

Our paper departs from this focus: we instead examine asymmetric contestants in the optimal prize design problem and investigate how the optimal prize structure varies with the degree of player asymmetry. Moreover, as we consider a general number of battles, the number of decision variables is also higher than \cite{Feng2018Split}.

\subsubsection{Relation to Team Contests with Pairwise Battles}

In the baseline analysis, the optimal identity-independent rule is a majority rule with a tie margin, echoing the finding of \cite*{kuang2026wp} in simultaneous team contests with pairwise battles. In both settings, the optimal design rests on a common property: the equilibrium winning probability in each battle is pinned down by the cost parameters of the players involved, independent of the prize-allocation rule. The mechanism behind this property, however, differs. In team contests, each participant competes in exactly one battle and bears only their own cost, so the ratio of effective battle prizes in each nontrivial battle is independent of other battles' outcomes \citep*{Fu2015Team,Feng2024OptimalTeam}.  In the present paper, the independence arises because the unique equilibrium is uniform: each player exerts equal effort across battles.  

For the same reason, when identity-dependent rules are permitted (\autoref{subsec:ID}), our optimal rule becomes a majority rule with a headstart, echoing the findings of \cite*{Feng2024OptimalTeam}. 

A key difference concerns robustness to cost externalities. Our results extend to cross-battle cost externalities (\autoref{subsec:alternative}), whereas in team contests with pairwise battles, such externalities are naturally absent---each battle is fought by a different player \citep*{Feng2024OptimalTeam,kuang2026wp}. If cross-battle cost externalities were introduced into team contests, the optimal prize characterizations would not immediately extend, as the equilibrium winning probability in each battle could no longer be pinned down solely by the cost parameters of the players involved.

\section{Conclusion}\label{Sec: Conclusion}

We study multidimensional contests between asymmetric players and establish a tight sufficient condition for equilibrium existence and uniqueness: $r=R/\gamma\leq 2/(n+1)$, holding uniformly for all eligible prize rules and all levels of asymmetry. The condition is tight: for larger $r$, no pure-strategy equilibrium exists under the majority rule when asymmetry is sufficiently large. Economically, the threshold $2/(n+1)$ captures a fundamental tension: more battles amplify the marginal returns to uniform effort, and the winner-selection mechanism must be sufficiently noisy to counteract this amplification. 

We further derive the optimal prize allocation rule, which takes the form of a majority rule with a tie margin. Under this rule, the entire prize is awarded to the player who wins more battles by a predetermined margin, and if no such player exists, the prize is split equally. When $n$ is odd, and players are symmetric, the conventional majority rule is optimal. The tie-margin rule reflects a general design principle: when players are asymmetric, the organizer raises the bar for outright victory, ensuring that a natural advantage in individual battles does not translate into certain overall success.

Given their relevance from both theoretical and practical perspectives, multidimensional contests between asymmetric players merit further research. At present, we are unable to fully characterize the equilibrium under arbitrary prize allocation rules and arbitrary forms of contestants’ asymmetry when the discriminatory power exceeds $2/(n+1)$. A promising direction for future work is to fill this gap. We conjecture that two forms of equilibria will emerge, both of which are uniform equilibria. First, as noted in \cite{Klumpp2006Primaries}, the equilibrium may have both contestants employ mixed strategies.\footnote{Besides the analytical analyses for symmetric contestants, they rely on simulations to study mixed-strategy equilibrium for players with asymmetric efficiencies.} Second, we expect a semi‑pure‑strategy equilibrium in which the strong contestant uses a pure strategy, while the weak contestant randomizes between zero and a positive effort level. The latter equilibrium structure emerges in a one‑shot simultaneous Tullock contest between asymmetric players when the discriminatory power is moderately larger than one \citep{Wang2010optimal}, and this structure disappears when the two players are symmetric. \cite{Ewerhart2015Mixed} and \cite{Ewerhart2017Noise}  provide useful tools for analyzing mixed-strategy equilibria in one-dimensional Tullock contests and all-pay auctions among asymmetric players. These tools might be extended to analyze multidimensional contests between asymmetric players.

A far more challenging extension is to account for heterogeneity across multiple dimensions---for instance, where each dimension exhibits distinct discriminatory power or different marginal effort costs for players. The key difficulty lies in the fact that we can no longer restrict attention to uniform strategies. Furthermore, the prize allocation rule may also depend on the full profile of battle outcomes, rather than merely the number of battles won, due to the inherent heterogeneity across battles.

\newpage

\appendix

\begin{center}
    \Huge Appendix
\end{center}

This appendix covers the proofs of \autoref{lem:Uniform Strategy}, \autoref{lem:indep}, \autoref{lem:V'V}, \autoref{lem:singlepeak}, and \autoref{Lemma for p}. Other proofs are in the main text.

\section{Proof}

We first introduce a useful lemma to prove \autoref{lem:Uniform Strategy}.

\begin{lemma}\label{lem:interchangeable}
Suppose the players’ payoff functions satisfy\\
1. Separable: $\pi_i(\mu_A,\mu_B) = F_i(\mu_A,\mu_B) - C_i(\mu_i)$,\\
2. Fixed sum: $F_A(\mu_A,\mu_B)+F_B(\mu_A,\mu_B)$ is a constant.\\
The equilibria of a two-player game are \textbf{interchangeable}.
\end{lemma}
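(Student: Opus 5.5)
The proof is the standard argument that equilibria of a strictly competitive (zero-sum-like) game are interchangeable. The plan is to reduce the game to a constant-sum game up to strategy-dependent terms that do not affect best responses, and then run the usual saddle-point argument.

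First I would define an auxiliary payoff $\Phi(\mu_A,\mu_B)\triangleq F_A(\mu_A,\mu_B)-C_A(\mu_A)+C_B(\mu_B)$. Write $K$ for the constant value of $F_A+F_B$. Then:

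\begin{itemize}
\item $\pi_A(\mu_A,\mu_B)=\Phi(\mu_A,\mu_B)-C_B(\mu_B)$, where the last term does not depend on $\mu_A$.
\item $\pi_B(\mu_A,\mu_B)=K-F_A-C_B(\mu_B)=-\Phi(\mu_A,\mu_B)+K-C_A(\mu_A)$, where $K-C_A(\mu_A)$ does not depend on $\mu_B$.
\end{itemize}

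So $A$'s best responses to any $\mu_B$ are exactly the maximizers of $\Phi(\cdot,\mu_B)$, and $B$'s best responses to any $\mu_A$ are exactly the minimizers of $\Phi(\mu_A,\cdot)$. Hence $(\mu_A,\mu_B)$ is a Nash equilibrium of the original game if and only if it is a saddle point of $\Phi$:
\[
\Phi(\mu_A',\mu_B)\le\Phi(\mu_A,\mu_B)\le\Phi(\mu_A,\mu_B')\quad\text{for all }\mu_A',\mu_B'.
\]
For mixed strategies I would interpret $F_i$ and $C_i$ as expectations. The separability and fixed-sum properties survive taking expectations over independent mixtures, because the expectation of a constant is that constant and $C_i$ depends only on $\mu_i$.

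Next I would apply the classical saddle-point argument. Let $(\mu_A,\mu_B)$ and $(\mu_A',\mu_B')$ be two equilibria, i.e.\ two saddle points. Using the saddle inequalities alternately gives
\[
\Phi(\mu_A,\mu_B)\le\Phi(\mu_A,\mu_B')\le\Phi(\mu_A',\mu_B')\le\Phi(\mu_A',\mu_B)\le\Phi(\mu_A,\mu_B),
\]
so all four values coincide. To check that $(\mu_A,\mu_B')$ is a saddle point, take any $\tilde\mu_A$ and $\tilde\mu_B$:
\[
\Phi(\tilde\mu_A,\mu_B')\le\Phi(\mu_A',\mu_B')=\Phi(\mu_A,\mu_B')=\Phi(\mu_A,\mu_B)\le\Phi(\mu_A,\tilde\mu_B).
\]
The same reasoning shows that $(\mu_A',\mu_B)$ is a saddle point. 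Translating back through the equivalence above, both swapped profiles are equilibria.

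The only delicate step is a technical one: payoffs must be well defined and finite for the mixed strategies considered, so that the subtractions in $\Phi$ are legitimate. I would restrict attention to strategies with finite expected cost. This is harmless, because any strategy with infinite expected cost gives an unbounded negative payoff and is never a best response. Beyond that, the argument is the textbook interchangeability of saddle points.
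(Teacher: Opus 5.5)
Your proof is correct and is essentially the paper's argument: the paper chains the same four equilibrium inequalities (player $A$'s optimality at one equilibrium and player $B$'s at the other, then again with the two equilibria swapped), rewritten via the fixed-sum property in terms of $F_B$ and the costs, and concludes that all of them bind. This is exactly your cycle $\Phi(\mu_A,\mu_B)\le\Phi(\mu_A,\mu_B')\le\Phi(\mu_A',\mu_B')\le\Phi(\mu_A',\mu_B)\le\Phi(\mu_A,\mu_B)$ written without naming $\Phi$, so your saddle-point formulation is a cleaner packaging of the same idea, and your finite-expected-cost remark addresses a well-definedness point the paper leaves implicit.
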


\begin{proof}
The proof is adapted from that of \citet{Klumpp2006Primaries} and \cite{Ewerhart2017Revenue}.

Consider two equilibria, $\mu^* = (\mu_A^*, \mu_B^*)$ and $\mu^{**} = (\mu_A^{**}, \mu_B^{**})$. 
By the definition of equilibrium, for player $A$ we have $F_A(\mu_A^{**}, \mu_B^*) - C_A(\mu_A^{**}) 
\le F_A(\mu_A^*, \mu_B^*) - C_A(\mu_A^*)$, which implies that
\begin{equation*}
F_A(\mu_A^{**}, \mu_B^*) - F_A(\mu_A^*, \mu_B^*) 
\le C_A(\mu_A^{**}) - C_A(\mu_A^*).
\end{equation*}
By the fixed-sum property (i.e., $F_A(\mu_A, \mu_B) + F_B(\mu_A, \mu_B)$ is a constant), it follows that
\begin{equation}\label{eqn:inter-1}
F_B(\mu_A^*, \mu_B^*) - F_B(\mu_A^{**}, \mu_B^*) 
\le C_A(\mu_A^{**}) - C_A(\mu_A^*). 
\end{equation}

Similarly, for player $B$, the definition of equilibrium implies that
\begin{equation}\label{eqn:inter-2}
F_B(\mu_A^{**}, \mu_B^{*}) - F_B(\mu_A^{**}, \mu_B^{**}) \le C_B(\mu_B^{*}) - C_B(\mu_B^{**}).
\end{equation}
Combining \eqref{eqn:inter-1} and \eqref{eqn:inter-2}, we obtain
\begin{align*}
F_B(\mu_A^*, \mu_B^*) - F_B(\mu_A^{**}, \mu_B^{**}) 
&\le \big(C_A(\mu_A^{**}) - C_A(\mu_A^{*})\big)
   + \big(C_B(\mu_B^*) - C_B(\mu_B^{**})\big). 
\end{align*}
Switching the role of $\mu^*$ and $\mu^{**}$, we have
\begin{align*}
F_B(\mu_A^{**}, \mu_B^{**}) - F_B(\mu_A^*, \mu_B^*) 
&\le \big(C_A(\mu_A^{*}) - C_A(\mu_A^{**})\big) 
   + \big(C_B(\mu_B^{**}) - C_B(\mu_B^*)\big). 
\end{align*}

Combining the inequalities above, we obtain
\begin{align*}
F_B(\mu_A^{**}, \mu_B^{**}) - F_B(\mu_A^*, \mu_B^*) 
&= \big(C_A(\mu_A^{**}) - C_A(\mu_A^*)\big)
 + \big(C_B(\mu_B^{**}) - C_B(\mu_B^*)\big). 
\end{align*}
Then it follows that all of the above inequalities must hold with equality. 
Thus, both $(\mu_A^*, \mu_B^*)$ and $(\mu_A^{**}, \mu_B^{**})$ yield identical utilities:
\begin{equation*}
U_A(\mu_A^{**}, \mu_B^*)= U_A(\mu_A^*, \mu_B^*),\quad
U_B(\mu_A^{**}, \mu_B^{**})= U_B(\mu_A^{**}, \mu_B^*),
\end{equation*}
which implies that $(\mu_A^{**}, \mu_B^{*})$ are mutual best responses. 
Analogously, we have
\begin{equation*}
U_A(\mu_A^{**}, \mu_B^{**})=U_A(\mu_A^*, \mu_B^{**}),\quad
U_B(\mu_A^{*}, \mu_B^{**})= U_B(\mu_A^*, \mu_B^*),
\end{equation*}
which means that $(\mu_A^{*}, \mu_B^{**})$ are mutual best responses, and hence the equilibria are interchangeable.
\end{proof}

\subsection*{Proof of \autoref{lem:Uniform Strategy}}

\subsubsection*{Part (i):}

\textbf{Step 1}. We aim to show that, when player $A$ adopts a uniform pure strategy and player $B$ adopts a non-uniform pure strategy $\bm{x}_B$, player $B$ would find it beneficial to average his expenditure in two battles if their expenditures in $\bm{x}_B$ are not the same. Suppose $x_{B(k)}>x_{B(l)}$ for some battles $k$ and $l$.

Fix player A’s strategy to be a uniform pure strategy with $x_{A(j)}=x_A^*$ for all $i\in \mathcal{N}$. Consider a deviation $\tilde{\bm{x}}_B$ which fixes expenditures in all battles other than $k,l$ and equates the effort levels in $k$ and $l$ so that total expenditures do not change:
\begin{equation*}
\tilde{x}_{B(k)}=\tilde{x}_{B(l)}=\frac{x_{B(k)}+x_{B(l)}}{2}\triangleq\tilde{x}_{B}.
\end{equation*}

Let $Q_B(s)$ (resp. $\tilde{Q}_B(s)$), $s\in \{0,1,2\}$, denote the probability that player $B$ wins exactly $s$ times among the two battles $k$ and $l$, when using strategy $\bm{x}_B$ (resp. $\tilde{\bm{x}}_B$). Finally, let $Q^\neg _B(t), t\in \{0,1,\cdots,n-2\}$, be the probability of winning exactly $t$ out of the remaining $n-2$ battles.

Hence, player $B$’s gain from changing to the new strategy is
\begin{eqnarray}\label{Equa:gain}
   && \Delta \mathbb{E}(\pi_B)\\ \nonumber
   && =\sum_{t=0}^{n-2}Q^\neg _B(t)\left(v(t)(\tilde{Q}_B(0)-Q_B(0))+v(t+1)(\tilde{Q}_B(1)-Q_B(1))+v(t+2)(\tilde{Q}_B(2)-Q_B(2))\right).
\end{eqnarray}
Since $\{Q^\neg _B(t)\}_{t=0}^{n-2}$ do not change when switching from strategy $\bm x_B$ to $\tilde{\bm x_B}$, a sufficient condition for $\Delta \mathbb{E}(\pi_B)$ to be positive is that
\begin{align*}
&v(t)(\tilde{Q}_B(0)-Q_B(0))+v(t+1)(\tilde{Q}_B(1)-Q_B(1))+v(t+2)(\tilde{Q}_B(2)-Q_B(2))\\
=&v(t)(\tilde{Q}_B(0)-Q_B(0))+v(t+1)\left([1-\tilde{Q}_B(2)-\tilde{Q}_B(0)]-[1-Q_B(0)-Q_B(2)]\right)\\
&\ \ \ +v(t+2)(\tilde{Q}_B(2)-Q_B(2))\\
=&(v(t)-v(t+1))(\tilde{Q}_B(0)-Q_B(0))+(v(t+2)-v(t+1))(\tilde{Q}_B(2)-Q_B(2))\geq 0. 
\end{align*}

In the following, we will show that $\tilde{Q}_B(0)<Q_B(0)$ and $\tilde{Q}_B(2)>Q_B(2)$. Combining with the monotonicity of $v$ (i.e., $v(t)\leq v(t+1),\forall t$ with at least one strict inequality), we can prove that \autoref{Equa:gain} is strictly positive.

\textbf{I. $\tilde{Q}_B(0)<Q_B(0)$}.
\begin{align*}
\tilde{Q}_B(0)-Q_B(0)
&= \frac{x_A^r}{x_A^r+\tilde{x}_B^r}\cdot \frac{x_A^r}{x_A^r+\tilde{x}_B^r}-\frac{x_A^r}{x_A^r+x_{B(k)}^r}\cdot \frac{x_A^r}{x_A^r+x_{B(l)}^r} \\
&= x_A^{2r}\cdot \frac{(x_A^r+x_{B(k)}^r)(x_A^r+x_{B(l)}^r)-(x_A^r+\tilde{x}_B^r)^2}{(x_A^r+x_{B(k)}^r)(x_A^r+x_{B(l)}^r)(x_A^r+\tilde{x}_B^r)^2}\\
&= x_A^{2r}\cdot \frac{x_A^r\big(x_{B(k)}^r+x_{B(l)}^r-2\tilde{x}_B^r\big)+\big(x_{B(k)}^rx_{B(l)}^r-\tilde{x}_B^{2r}\big)}{(x_A^r+x_{B(k)}^r)(x_A^r+x_{B(l)}^r)(x_A^r+\tilde{x}_B^r)^2}.
\end{align*}
Recall that $\tilde{x}_B=\frac{x_{B(k)}+x_{B(l)}}{2}$ and $r\in(0,1]$. Since $x^r$ is a concave function, we have $x_{B(k)}^r+x_{B(l)}^r<2\tilde{x}_B^r$. Meanwhile, $\tilde{x}_B>\sqrt{x_{B(k)}x_{B(l)}}$ by AM-GM inequality and hence we have $x_{B(k)}^rx_{B(l)}^r<\tilde{x}_B^{2r}$.

\textbf{II. $\tilde{Q}_B(2)-Q_B(2)>0$}.
\begin{align*}
\tilde{Q}_B(2)-Q_B(2)
&=\frac{\tilde{x}_B^r}{x_A^r+\tilde{x}_B^r}\cdot \frac{\tilde{x}_B^r}{x_A^r+\tilde{x}_B^r} - \frac{x_{B(k)}^r}{x_A^r+x_{B(k)}^r}\cdot \frac{x_{B(l)}^r}{x_A^r+x_{B(l)}^r}\\
&=\frac{\tilde{x}_{B}^{2r}(x_A^r+x_{B(k)}^r)(x_A^r+x_{B(l)}^r)-x_{B(k)}^r x_{B(l)}^r(x_A^r+\tilde{x}_{B}^r)^2}{(x_A^r+x_{B(k)}^r)(x_A^r+x_{B(l)}^r)(x_A^r+\tilde{x}_{B}^r)^2}\\
&= \frac{x_A^{2r}\big(\tilde{x}_{B}^{2r}-x_{B(k)}^r x_{B(l)}^r\big)
\;+\;x_A^{r}\tilde{x}_{B}^{r}\Big(\tilde{x}_{B}^{r}\big(x_{B(k)}^r+x_{B(l)}^r\big)-2x_{B(k)}^r x_{B(l)}^r\Big)}{(x_A^r+x_{B(k)}^r)(x_A^r+x_{B(l)}^r)(x_A^r+\tilde{x}_{B}^r)^2}.
\end{align*}
To show that $\tilde{x}_{B}^{r}\big(x_{B(k)}^r+x_{B(l)}^r\big)-2x_{B(k)}^r x_{B(l)}^r>0$, we use the following inequalities,
\begin{equation*}
\tilde{x}_{B}^{r} \ge \frac{x_{B(k)}^r+x_{B(l)}^r}{2} \ge \sqrt{x_{B(k)}^r x_{B(l)}^r}.
\end{equation*}

\textbf{Step 2}. We aim to show that, when player $A$ adopts a uniform pure strategy, the non-uniform strategy $\bm{x}_B$ is (strictly) dominated by the uniform strategy $\bar{\bm{x}}_B$ with the same total expenditure.

Consider the following algorithm: First, start with $\bm{x}_B$, and select the two battles with the greatest and the smallest effort (say, $k$ and $l$). Second, replace the effort levels in these battles by their mean $\frac{x_{B(k)}+x_{B(l)}}{2}$; by the arguments presented above, this strategy will be better for player B than the initial strategy. Third, if the new strategy profile is not yet uniform, go back to the first step and repeat the procedure for the two battles with the greatest and the smallest effort level currently. Clearly, this algorithm converges to the uniform strategy $\bar{\bm{x}}_B$ with equal effort among the battles. Since utility increases with every cycle, it is proved that any non-uniform strategy $\bm{x}_B$ is dominated by the uniform strategy $\bar{\bm{x}}_B$.

\textbf{Step 3}. Since the arguments in the previous step are independent of $\bm{x}_A$, we conclude that strategy $\bm{x}_B$ is dominated by the strategy $\bar{\bm{x}}_B$ also when player $A$ randomizes over pure uniform strategies. This finishes the proof of Part (i).

\subsubsection*{Part (ii):}

\textbf{Step 1}. Given a two-player game, we say that two equilibria $(\mu_A,\mu_B)$ and $(\mu_A^{'},\mu_B^{'})$ are
\textbf{interchangeable} if $(\mu_A^{'},\mu_B)$ and $(\mu_A,\mu_B^{'})$ are equilibria as well. Here, $\mu_i$ can be a mixed strategy, i.e., $\mu_i$ represents a distribution of $\bm{x}_i=(x_{i(1)},x_{i(2)},\cdots,x_{i(n)})$.

Extending the notation of $\mathbb{P}$, we use $\mathbb{P}_i(k|n,\mu_A,\mu_B)$ to denote the probability that player $i$ wins exactly $k$ battles out of $n$ battles under the strategy profile $(\mu_A,\mu_B)$. Notice that
\begin{equation*}
    \pi_{i}=\sum_{k=0}^n\mathbb{P}_i(k|n,\mu_A,\mu_B)v(k)-c_i\mathbb{E}\left[\sum_{j=1}^n x_{i(j)}\right]\triangleq F_i(\mu_A,\mu_B)-C_i(\mu_i).
\end{equation*}
Here, $F_i(\mu_A,\mu_B)$ represents the expected prize gained by player $i$ and $C_i(\mu_i)$ denotes the expected cost that depends solely on $i$'s strategy $\mu_i$. Moreover, $F_A(\mu_A,\mu_B)+F_B(\mu_A,\mu_B)=1$ by \autoref{ass:3}(iii). Therefore, the equilibria of $\mathcal{G}$ are interchangeable by \autoref{lem:interchangeable}.

\textbf{Step 2}. Suppose $(\mu_A,\mu_B)$ is a uniform equilibrium and $(\mu_A',\mu_B')$ is a non-uniform equilibrium because $\mu_B'$ is not a uniform strategy. Then, $(\mu_A,\mu_B')$ should be an equilibrium by interchangeability. However, $\mu_B'$ cannot be a best response to a uniform strategy $\mu_A$ by Step 1 in Part (i). Therefore, as long as equilibrium exists in game $\tilde{\mathcal{G}}$, every equilibrium in $\mathcal{G}$ is a uniform equilibrium. This finishes the proof of Part (ii).

\subsection*{Proof of \autoref{lem:indep}}

Given $(x_A,x_B)$, the probability of each player winning the battle is given by $(p_A,p_B)$ with $p_A+p_B=1$. Then, the expected utility of each player is given by
\begin{align*}
\pi_A(x_A,x_B) &=\sum_{k=0}^n \binom{n}{k} p_A^k (1-p_A)^{n-k} v(k) - nc_A x_A;\\
\pi_B(x_A,x_B) &=\sum_{k=0}^n \binom{n}{k} p_B^k (1-p_B)^{n-k} v(k) - nc_B x_B.
\end{align*}
Here, $p_i=\binom{n}{k} p_i^k (1-p_i)^{n-k}$ is the probability that player $i$ wins $k$ battles out of $n$ battles.

Fixed $x_B$, the optimization of player $A$ requires that,
\begin{equation*}
    \frac{\partial \pi_A}{\partial x_A} = \sum_{k=0}^n \binom{n}{k}\left[kp_A^{k-1} (1-p_A)^{n-k}-(n-k)p_A^k(1-p_A)^{n-k-1}\right]\cdot \frac{\partial p_A}{\partial x_A} \cdot v(k) - nc_A = 0.
\end{equation*}
Given $p_A=\frac{x_A^r}{x_A^r+x_B^r}$, we have  $\frac{\partial p_A}{\partial x_A}=\frac{rx_A^{r-1}x_B^r}{(x_A^r+x_B^r)^2}=\frac{r}{x_A}\cdot \frac{x_A^r}{x_A^r+x_B^r} \cdot \frac{x_B^r}{x_A^r+x_B^r}=\frac{rp(1-p)}{x_A}$. Hence, $\frac{\partial \pi_A}{\partial x_A} =0$ can be rearranged as
\begin{equation*}
    \sum_{k=0}^n \binom{n}{k}\left[kp_A^{k-1}p_B^{n-k}-(n-k)p_A^kp_B^{n-k-1}\right]v(k) = \frac{nc_Ax_A}{rp_Ap_B}.
\end{equation*}
For the left-hand side, it can be further simplified as follows,
\begin{align*}
    \text{LHS} &=\sum_{k=1}^n k\binom{n}{k}p_A^{k-1}p_B^{n-k}v(k)-\sum_{k=0}^{n-1} (n-k)\binom{n}{k}p_A^{k}p_B^{n-k-1}v(k)\\
    &=\sum_{k=1}^n n\binom{n-1}{k-1}p_A^{k-1}p_B^{n-k}v(k)-\sum_{k=0}^{n-1}n\binom{n-1}{k}p_A^{k}p_B^{n-k-1}v(k) \\
    &=n\sum_{t=0}^{n-1} \binom{n-1}{t}p_A^{t}p_B^{n-t-1}v(t+1)-n\sum_{k=0}^{n-1}\binom{n-1}{k}p_A^{k}p_B^{n-k-1}v(k) \\
    &=n\sum_{k=0}^{n-1} \binom{n-1}{k}p_A^{k}p_B^{n-k-1}[v(k+1)-v(k)].
\end{align*}
Therefore,
\begin{equation*}
    n\sum_{k=0}^{n-1} \binom{n-1}{k}p_A^{k}p_B^{n-k-1}[v(k+1)-v(k)] = \frac{nc_Ax_A}{rp_Ap_B}.
\end{equation*}

Similarly for player $B$, we have
\begin{equation*}
    n\sum_{k=0}^{n-1} \binom{n-1}{k}p_B^{k}p_A^{n-k-1}[v(k+1)-v(k)] = \frac{nc_Bx_B}{rp_Ap_B}.
\end{equation*}
Let $t=n-k-1$, the first-order condition for player $B$ is rewritten as
\begin{equation*}
    n\sum_{t=0}^{n-1} \binom{n-1}{t}p_A^tp_B^{n-t-1}[v(n-t)-v(n-t-1)] = \frac{nc_Bx_B}{rp_Ap_B}.
\end{equation*}
Since $v(t)+v(n-t)=1=v(t+1)+v(n-t-1)$, we have $v(n-t)-v(n-t-1)=v(t+1)-v(t)$. Then, 
\begin{equation*}
    \frac{nc_Ax_A}{rp_Ap_B} = n\sum_{t=0}^{n-1} \binom{n-1}{t}p_A^tp_B^{n-t-1}[v(t+1)-v(t)] = \frac{nc_Bx_B}{rp_Ap_B}.
\end{equation*}
Hence, we can conclude that $c_Bx_B=c_Ax_A$; $p_A$ and $p_B$ are constants that are unrelated to the prize allocation rule.

\subsection*{Proof of \autoref{lem:V'V}}

First note that 
\begin{align*}
\frac{\partial \theta(t|n-1,p)}{\partial p}&=t\binom{n-1}{t}p^{t-1}(1-p)^{n-t-1}-(n-t-1)\binom{n-1}{t}p^t(1-p)^{n-t-2}\\
&=\theta(t|n-1,p)\left(\frac{t}{p}-\frac{n-t-1}{1-p}\right)=\theta(t|n-1,p)\frac{t-p(n-1)}{p(1-p)}.
\end{align*}
Hence, we obtain
\begin{equation}\label{eq:lemmaV(p)}
p(1-p)\mathcal V'(p)=\sum_{t=0}^{n-1}\Delta v(t)\theta(t\mid n-1,p)\big(t-(n-1)p\big).
\end{equation}
Using $\theta(t)$ to refer $\theta(t|n-1,p)$ for simplicity. Dividing both sides of \eqref{eq:lemmaV(p)} by $\mathcal V(p)$ yields
\begin{equation*}\label{eq:key_identity_weight}
p(1-p)\frac{\mathcal V'(p)}{\mathcal V(p)}=\frac{\sum_{t=0}^{n-1}\Delta v(t)\theta(t)t}{\mathcal{V}(p)}-(n-1)p.
\end{equation*}

When $p\geq0.5$, by $\frac{\sum_{t=0}^{n-1}\Delta v(t)\theta(t)t}{\sum_{t=0}^{n-1}\Delta v(t)\theta(t)}\leq n-1$, we easily obtain that
$p(1-p)\frac{\mathcal V'(p)}{\mathcal V(p)}\leq\frac{n-1}{2}$.

When $p<0.5$, using $\Delta v(t)=\Delta v(t')$ with $t'=n-1-t$, we have
\[
\Delta v(t)\theta(t)t+\Delta v(t')\theta(t')t'\leq\Delta v(t)(t+t')(\theta(t)+\theta(t'))/2=(n-1)(\Delta v(t)\theta(t)+\Delta v(t')\theta(t'))/2,
\]
because $t'\geq t$ and $\theta(t)\ge\theta(t')$. Hence,
\[
\sum_{t=0}^{n-1}\Delta v(t)\theta(t)t\leq\frac{n-1}{2}\sum_{t=0}^{n-1}\Delta v(t)\theta(t)=\frac{n-1}{2}\mathcal{V}(p),
\]
which suffices to ensure $p(1-p)\frac{\mathcal V'(p)}{\mathcal V(p)}<\frac{n-1}{2}$.

\subsection*{Proof of \autoref{lem:singlepeak}}

When $p_A=p_B=0.5$, we have $k^*=\lfloor\frac{n-1}{2}\rfloor$ and $g(k)$ increases in $k$ for all $k$. In the remainder of the proof, we consider $p_A\neq p_B$. We define $\phi(k)=g(k+1)/g(k)$ such that $k+1\leq\lfloor\frac{n-1}{2}\rfloor$. It suffices to prove that $\phi(k)\ge 1\implies\phi(k-1)\ge 1$. 

If we define $m=n-2k-3$, we have
\begin{equation*}
\phi(k)=\frac{\binom{n-1}{k+1}}{\binom{n-1}{k}}\frac{p_A^{k+1}p_B^{n-k-2}+p_A^{n-k-2}p_B^{k+1}}{p_A^{k}p_B^{n-k-1}+p_A^{n-k-1}p_B^{k}}=\frac{m+k+2}{k+1}\frac{p_A^{m+1}p_B+p_B^{m+1}p_A}{p_A^{m+2}+p_B^{m+2}}.
\end{equation*}
Then, we have the following equivalent statements,
\begin{eqnarray*}
\phi(k)\ge 1&\iff& 1+\frac{m+1}{k+1}\ge \frac{p_A^{m+2}+p_B^{m+2}}{p_A^{m+1}p_B+p_B^{m+1}p_A}\\
&\iff&\frac{m+1}{k+1}\ge \frac{p_A^{m+2}+p_B^{m+2}-p_A^{m+1}p_B-p_B^{m+1}p_A}{p_A^{m+1}p_B+p_B^{m+1}p_A}\\
&\iff&k\le \frac{(m+1)(p_A^{m+1}p_B+p_B^{m+1}p_A)}{(p_A-p_B)(p_A^{m+1}-p_B^{m+1})}-1=\bar{k}.
\end{eqnarray*}
We also have $\phi(k-1)=\frac{m+k+3}{k}\frac{p_A^{m+3}p_B+p_B^{m+3}p_A}{p_A^{m+4}+p_B^{m+4}}$. Similarly, 
\begin{equation*}
\phi(k-1)\ge 1\iff 1+\frac{m+3}{k}\ge \frac{p_A^{m+4}+p_B^{m+4}}{p_A^{m+3}p_B+p_B^{m+3}p_A}\iff k\le \frac{(m+3)(p_A^{m+3}p_B+p_B^{m+3}p_A)}{(p_A-p_B)(p_A^{m+3}-p_B^{m+3})}=\bar{K}.
\end{equation*}
We need to show $\bar{k}\le \bar{K}$. It suffices to prove the following stronger inequality.
\begin{equation}\label{Equa:inequality}
\frac{(m+1)(p_A^{m+1}p_B+p_B^{m+1}p_A)}{(p_A-p_B)(p_A^{m+1}-p_B^{m+1})}\le \frac{(m+3)(p_A^{m+3}p_B+p_B^{m+3}p_A)}{(p_A-p_B)(p_A^{m+3}-p_B^{m+3})}.
\end{equation}
\autoref{Equa:inequality} holds if and only if the following inequality holds,
\begin{eqnarray*}
(m+1)(p_A^{m}+p_B^{m})\frac{p_A^{m+3}-p_B^{m+3}}{p_A-p_B}\le (m+3)(p_A^{m+2}+p_B^{m+2})\frac{p_A^{m+1}-p_B^{m+1}}{p_A-p_B}.
\end{eqnarray*}
Since $\frac{x^t-y^t}{x-y}=\sum_{i=0}^{t-1}x^iy^{t-1-i}$ holds for $x\neq y$ and positive integer $t$, the above inequality can be rewritten as
\begin{eqnarray*}
&&(m+1)(p_A^{m}+p_B^{m})\sum_{i=0}^{m+2}p_A^ip_B^{m+2-i}\leq (m+3)(p_A^{m+2}+p_B^{m+2})\sum_{i=0}^{m}p_A^ip_B^{m-i},\\
&\iff&(m+1)[\mathbf{P}(p_A,p_B,m)+2p_A^{m+1}p_B^{m+1}+(p_A^{m}p_B^{m+2}+p_B^{m}p_A^{m+2})]\le (m+3)\mathbf{P}(p_A,p_B,m),
\end{eqnarray*}
where $\mathbf{P}(p_A,p_B,m)=(p_A^{2m+2}+p_B^{2m+2})+(p_A^{2m+1}p_B+p_B^{2m+1}p_A)+\cdots+(p_A^{m+2}p_B^{m}+p_B^{m+2}p_A^{m})$.

Namely, we need to show that
\begin{equation}\label{eqn:laststep}
(m+1)[2p_A^{m+1}p_B^{m+1}+(p_A^{m}p_B^{m+2}+p_B^{m}p_A^{m+2})]\le 2\mathbf{P}(p_A,p_B,m).
\end{equation}

Apparently, 
\begin{align*}
    \mathbf{P}(p_A,p_B,m)&=(p_A^{2m+2}+p_B^{2m+2})+(p_A^{2m+1}p_B+p_B^{2m+1}p_A)+\cdots+(p_A^{m+2}p_B^{m}+p_B^{m+2}p_A^{m})\\
&\geq\underbrace{2p_A^{m+1}p_B^{m+1}+2p_A^{m+1}p_B^{m+1}+\cdots+2p_A^{m+1}p_B^{m+1}}_{m+1\text{ terms}}=(m+1)2p_A^{m+1}p_B^{m+1}.
\end{align*}

Suppose there are four positive numbers $C_1<C_2<C_3<C_4$ such that $C_1C_4=C_2C_3$. It is trivial that $C_2+C_3<C_1+C_4$. Hence, $\mathbf{P}(p_A,p_B,m)\geq(m+1)(p_A^{m}p_B^{m+2}+p_B^{m}p_A^{m+2})$ because $p_A^{2m+2}+p_B^{2m+2}\geq p_A^{m}p_B^{m+2}+p_B^{m}p_A^{m+2}$, and $p_A^{2m+1}p_B+p_B^{2m+1}p_A\geq p_A^{m}p_B^{m+2}+p_B^{m}p_A^{m+2}$, etc. Therefore, \autoref{eqn:laststep} holds.

\subsection*{Proof of \autoref{Lemma for p}}

In the proof, we use $p_A$ and $p$ interchangeably.
Let
\[
\delta\!\left(p\mid k',k''\right)=g(k'',p)-g(k',p),
\qquad k''>k'\geq\frac{n-1}{2}.
\]
We will show that $\delta(p'\mid k',k'')>0$ implies that
$\delta(p''\mid k',k'')>0$ whenever $p''>p'$. 


\medskip
\noindent\textbf{Step (1): Definitions.}
Let $\beta(k,p)=p_A^{k}p_B^{\,n-1-k}+p_A^{\,n-1-k}p_B^{k}$, then $g(k,p)=\binom{n-1}{k}\beta(k,p)$.
We further define
\[
\tau(k,p)\triangleq \frac{\beta(k+1,p)}{\beta(k,p)}
=\frac{\frac{p_A}{p_B}+\left(\frac{p_B}{p_A}\right)^{2k+2-(n-1)}}
{1+\left(\frac{p_B}{p_A}\right)^{2k-(n-1)}}
=\frac{1}{\eta}\cdot\frac{1+\eta^{2k+2-(n-1)}}{1+\eta^{2k-(n-1)}},
\]
where $\eta\triangleq \frac{p_B}{p_A}\in(0,1]$, which reflects the strength of player $B$
relative to player $A$.

\medskip
\noindent\textbf{Step (2): Prove that $\tau(k,p)$ increases with $p$.}
Since $\tau=\frac{1}{\eta}\frac{1+\eta^{2k+2-(n-1)}}{1+\eta^{2k-(n-1)}}$,
we prove that $\frac{\partial \tau}{\partial \eta}<0$ and therefore
$\frac{\partial \tau}{\partial p}>0$.
We have
\[
\frac{\partial \tau(k,\eta)}{\partial \eta}
=
-\frac{1}{\eta^2}\frac{1+\eta^{2k+2-(n-1)}}{1+\eta^{2k-(n-1)}}
+\frac{1}{\eta}\Phi(n,k,\eta)
=
-\frac{1}{\eta^2}\frac{1}{\left(1+\eta^{2k-(n-1)}\right)^2}\Psi(n,k,\eta),
\]
where
\begin{align*}
\Phi(n,k,\eta)
&=
\frac{(2k+2-(n-1))\eta^{2k+1-(n-1)}}{1+\eta^{2k-(n-1)}}
-\frac{1+\eta^{2k+2-(n-1)}}{\left(1+\eta^{2k-(n-1)}\right)^2}
(2k-(n-1))\eta^{2k-1-(n-1)},
\\
\Psi(n,k,\eta)
&=
\left(1+\eta^{2k+2-(n-1)}\right)\left(1+\eta^{2k-(n-1)}\right)
-\eta\left(1+\eta^{2k-(n-1)}\right)^2\Phi(n,k,\eta).
\end{align*}

We try to show $\Psi(n,k,\eta)$ is nonnegative:
\begin{align*}
\Psi(n,k,\eta)
&=
\frac{1}{\eta^{2n}}
\Big\{
\eta^{2n}-\eta^{4k+4}
-\big[2k+1-(n-1)\big]\eta^{2k+n+3}
+\big[2k+1-(n-1)\big]\eta^{2k+n+1}
\Big\}
\\
&=
\frac{1}{\eta^{2n}}
\Big[
(2k+2-n)\big(\eta^{2k+n+1}-\eta^{2k+n+3}\big)
+\big(\eta^{2n}-\eta^{4k+4}\big)
\Big]\ge 0,
\end{align*}
because $k\ge \frac{n-1}{2}$ and $\eta\in(0,1]$.
When $p$ increases, $\eta$ decreases and hence $\tau(k,p)$ increases.

\medskip
\noindent\textbf{Step (3): Prove that $\delta(p'\mid k',k'')>0$ implies that
$\delta(p''\mid k',k'')>0$ whenever $p''>p'$.}
When $\delta(p'\mid k',k'')>0$, we have $\frac{g(k'',p')}{g(k',p')}>1$ since $g(k'',p)>g(k',p)>0$.
To prove that $\delta(p''\mid k',k'')>0$, we need only to show that
$\frac{g(k'',p)}{g(k',p)}$ increases with $p$ based on Step (2).
Indeed,
\begin{align*}
\frac{g(k'',p'')}{g(k',p'')}
&=
\frac{\binom{n-1}{k''}\beta(k'',p'')}{\binom{n-1}{k'}\beta(k',p'')}
=
\frac{\binom{n-1}{k''}}{\binom{n-1}{k'}}
\prod_{\hat{k}=k'}^{k''-1}\tau(\hat{k},p'')
\\
&>
\frac{\binom{n-1}{k''}}{\binom{n-1}{k'}}
\prod_{\hat{k}=k'}^{k''-1}\tau(\hat{k},p')
=
\frac{\binom{n-1}{k''}\beta(k'',p')}{\binom{n-1}{k'}\beta(k',p')}
=
\frac{g(k'',p')}{g(k',p')}
>1.
\end{align*}
Therefore, $\delta(p''\mid k',k'')>0$ whenever $p''>p'$.
This verifies the single-crossing property.

\newpage

\bibliographystyle{aer}
\bibliography{Mybib}

\newpage
\end{document}